\documentclass{article}

\usepackage{amsmath}
\usepackage{amssymb}
\usepackage{amsthm}
\usepackage{alltt}
\usepackage{eufrak}
\usepackage{stmaryrd}
\usepackage{nicefrac}

\pdfmapline{=xbbold <./xbbold.pfb}
\pdfmapline{=upml <./upml.pfb}

\DeclareFontFamily{U}{upml}{}
\DeclareFontShape{U}{upml}{m}{n}{<-6>upml<6-8>upml<8->upml}{}
\DeclareSymbolFont{UPML}{U}{upml}{m}{n}

\DeclareMathSymbol{\blwedge}{\mathrel}{UPML}{"63}
\DeclareMathSymbol{\blvee}{\mathrel}{UPML}{"64}
\DeclareMathSymbol{\blRightarrow}{\mathrel}{UPML}{"69}
\DeclareMathSymbol{\blLeftrightarrow}{\mathrel}{UPML}{"65}
\DeclareMathSymbol{\blneg}{\mathord}{UPML}{"6E}
\DeclareMathSymbol{\blotimes}{\mathbin}{UPML}{"6F}

\DeclareFontFamily{U}{xbbold}{}
\DeclareFontShape{U}{xbbold}{m}{n}{<-6>xbbold<6-8>xbbold<8->xbbold}{}
\DeclareMathAlphabet{\blmathbb}{U}{xbbold}{m}{n}

\let\mathbb=\blmathbb


\newcommand{\DIN}[1][\relax]{%
  \ensuremath{%
    \ifthenelse{\equal{#1}{\relax}}{%
      \boldsymbol{\mathcal{D}}}{%
      \boldsymbol{\mathcal{D}}_{#1}}}}

\newcommand{\Tupl}[1]{\ensuremath{\mathop{\mathrm{Tupl}}(#1)}}

\newcommand{\RAeadom}[1]{\ensuremath{\mathop{\mathcal{E}_{#1}}}}
\newcommand{\RAeadomEV}[1]{\ensuremath{\mathop{\mathcal{E}^{\DIN}_{#1}}}}
\newcommand{\Dee}[1]{\ensuremath{#1_\emptyset}}

\newcommand{\join}[3]{\ensuremath{#2\bowtie_{#1}#3}}
\newcommand{\project}[2]{\ensuremath{\pi_{#1}(#2)}}
\newcommand{\frdiv}{\div}
\newcommand{\rdiv}[3]{\ensuremath{#2 \frdiv^{#1} #3}}

\newcommand{\rename}[2][f]{\ensuremath{\rho_{\bgroup\def\as##1##2{##2 \leftarrow ##1}#1\egroup}(#2)}}
\newcommand{\TupCommon}[3]{\ensuremath{r_{#1 #2}^{\between #3}}}
\def\itm#1{{\rm(\textit{\romannumeral#1})}}
\def\nltimes{\mathrel{\bar{\ltimes}}}

\def\circop{\mathop{\circ}}
\newcommand{\tv}[1]{\ensuremath{\blmathbb{#1}}}

\newcommand{\ptcsup}[0]{
  \bigvee_{\!\mathbf{r}_1} \mathcal{T}(\mathbf{r}_1, \mathbf{r}_2)
}
\newcommand{\ptcinf}[0]{
  \bigwedge_{\mathbf{r}_1} \mathcal{T}(\mathbf{r}_1, \mathbf{r}_2)
}

\newcommand{\ptcsupprime}[0]{
  \bigvee_{\!\mathbf{r}_1} \mathcal{T'}(\mathbf{r}_1, \mathbf{r}_2)
}
\newcommand{\ptcinfprime}[0]{
  \bigwedge_{\mathbf{r}_1} \mathcal{T'}(\mathbf{r}_1, \mathbf{r}_2)
}

\catcode64=11
\newcounter{global}
\theoremstyle{definition}
\newtheorem{definition}[global]{Definition}
\theoremstyle{plain}
\newtheorem{theorem}[global]{Theorem}
\newtheorem{lemma}[global]{Lemma}

\newtheorem{corollary}[global]{Corollary}
\newtheoremstyle{note}{}{}{}{}{\itshape}{.}{.5em}{}
\theoremstyle{note}
\newtheorem{remark}{Remark}%
\renewcommand\section{%
  \@startsection {section}{1}{\z@}%
  {-3.5ex \@plus -1ex \@minus -.2ex}%
  {2.3ex \@plus.2ex}%
  {\normalfont\large\bfseries}}
\catcode64=12


\begin{document}

\title{Relational Division in Rank-Aware Databases}

\author{Ondrej Vaverka\footnote{%
    e-mail: \texttt{ondrej.vaverka@upol.cz},
    phone: +420 585 634 705,
    fax: +420 585 411 643} \and Vilem Vychodil}

\date{\normalsize%
  Dept. Computer Science, Palacky University Olomouc}

\maketitle

\begin{abstract}
  We present a survey of existing approaches to relational division
  in rank-aware databases, discuss issues of the present approaches,
  and outline generalizations of several types of classic division-like
  operations. We work in a model which generalizes the Codd model of data
  by considering tuples in relations annotated by ranks, indicating degrees
  to which tuples in relations match queries. The approach utilizes complete
  residuated lattices as the basic structures of degrees.
  We argue that unlike the classic model, relational divisions are
  fundamental operations which cannot in general be expressed by means of
  other operations. In addition, we compare the existing and proposed operations
  and identify those which are faithful counterparts of universally
  quantified queries formulated in relational calculi. We introduce
  Pseudo Tuple Calculus in the ranked model which is further used to
  show mutual definability of the various forms of divisions
  presented in the paper.

\end{abstract}

\section{Introduction}
In this paper, we present a survey and new results in the area of division-like
operations in rank-aware relational models of data. In particular, we are interested
in models which allow \emph{imperfect matches of queries} in addition to the usual
precise yes/no matches of queries. By an ``imperfect match'' we mean a situation
where given record in a database does not match a query in the usual sense but the
record is sufficiently close to a (hypothetical) record that matches the
query exactly.
In many situations, it is desirable to include records with imperfect matches in
the result of a query and introduce \emph{scores}
which indicate the degrees to which the
records match the given query. For instance, in a database of products, we may
query for products with price equal to $\$1,200$. In the traditional understanding,
a product sold for $\$1,198$ does not match the query. Nevertheless, we may want
to include such product in the result and annotate it with a high score indicating
that the product matches the query ``almost perfectly'' but not fully. In fact,
reasoning with imperfect matches is inherent to human thinking and human
perception of concepts like the proximity of values.
Rank-aware databases~\cite{Il:ASOTQPTiRDS,LiChCh:RQAaOfRTQ} and
related models of data aim at such reasoning with imperfect matches and are concerned
with its formalisation, analysis, and implementation in computer database systems.

Our investigation of division-like operations is motivated by the fact that in
most of the existing rank-aware approaches to databases, discussion of such
operations is either completely omitted or focuses only on particular Codd-style
divisions. Indeed, compared to
operations like projections and joins, the current rank-aware approaches pay
little or no attention to division-like operations.

There seem to be two reasons for the absence of discussions of divisions in
rank-aware models:
First, a proposed rank-aware model simply omits divisions because its authors
do not consider such an operation important. Second, the authors of a rank-aware
model expect a division-like operation to be definable by the remaining operations
in a similar way as in the classic relational model of data. We argue that neither
of the points is tenable and divison-like operations deserve our attention:

\paragraph{1) Divisions are important}
Division-like operations are considered in relational query systems
in order to express queries which take form of particular
categorical propositions.
It is well understood that classic relational queries of the form
``some $\varphi$ is $\psi$'' can be expressed by means of combinations
of projections and natural joins which are known as semijoins. Analogous
queries can also be considered in rank-aware approaches with the same
meaning except for the fact that the results of queries are annotated by
scores.
Naturally, one should expect to be able to formulate queries of the form
of categorical proposition ``all $\varphi$ are $\psi$'' in a rank-aware model.
In the classic model, such queries are expressed by division-like operations.
In addition, some variants of the classic relational division have a close
relationship to the notions of containment (subsethood) of relations.
From this viewpoint, one should expect that containments and divisions in
a rank-aware model should both be defined and related as in the classic model.
Note that division-like operations are also interesting from the
data-analytical point of view.
For instance, concept-forming operators in formal concept
analysis~\cite{GaWi:FCA} can be seen as particular relational divisions.

\paragraph{2) Divisions in rank-aware models are fundamental operations}
If a rank-aware model contains operations of difference (relational minus),
projection, and natural join, one may argue that a Codd-style
division~\cite{Co:Armodflsdb} is a definable operation in the ranked model
in much the same way as it is definable in the classic model. While in the
classic model, reasonable division-like operations can indeed be derived,
we show further in the paper that this assumption cannot be universally adopted
in rank-aware models. Technically, the operation can be defined as in the
ordinary case but in many cases it lacks the basic properties of
``reasonable division'' and no longer is a faithful representation of
queries of the form of categorical propositions ``all $\varphi$ are $\psi$''.
As a matter of fact, we argue in the paper that suitable variants of
divisions (or equivalent formalisms) should be included as fundamental
operations in rank-aware models.

\medskip
In this paper we focus on divisions from the perspective of a relational
model which can be seen as a generalization of the Codd~\cite{Co:Armodflsdb}
model of data from the point of view of residuated structures of degrees.
The basic idea of the model is that tuples in relations are annotated
by scores indicating degrees to which tuples match queries
analogously as in~\cite{Fa98:CFIfMS,Fagin:Oaafm},
cf. also \cite{LiChCh:RQAaOfRTQ}
introducing RankSQL and a survey paper~\cite{Il:ASOTQPTiRDS}. Our model
differs in how we approach the structures of scores and, consequently,
the underlying logic of imperfect matches. We use structures of degrees
which are recognized by fuzzy logics in the
\emph{narrow sense}~\cite{CiHaNo1,CiHaNo2,GaJiKoOn:RL,Got:Mfl,Haj:MFL} and
the principle of \emph{truth functionality} because our intention
is to develop the model so that particular issues handled in the model
(like querying and data dependencies) can be analyzed in terms of
logical deduction in the narrow sense. This is in contrast with
various approaches that appeared
earlier~\cite{Bo:Opdfr,BoPiRo:Crdfr,BoDuPiPr:Fqrdedo,DuPr:Sqofrd}
and utilized techniques from fuzzy sets (in the wide sense) where
the connection to residuated structures of degrees is not so strict.
We argue in the paper that the role of residuated structures is crucial
for a sound treatment of division-like operations.

Our paper is organized as follows.
In Section~\ref{sec:prelim}, we recall basic notions of our model.
In Section~\ref{sec:appr}, we survey existing and propose new
approaches to division operations in the classic as well as in the
graded setting. In Section~\ref{sec:ptc}, we introduce a query language
called Pseudo Tuple Calculus (PTC) that enables us to reason about the
operations with ease. Finally, in Section~\ref{sec:more}, we utilize PTC
to derive further observations on the mutual definability of the division
operations described in the paper.

\section{Relational Model Based on Residuated Structures}\label{sec:prelim}
In this section, we present a survey of utilized notions from residuated
structures of degrees and fuzzy relational systems. Furthermore, we introduce
the basic notions of the generalized relational model of data and
its relational algebra~\cite{BeVy:TODS}.

\subsection{Structures of Degrees}
We use complete residuated lattices as structures of degrees which represent
scores assigned to tuples and indicating degrees to which tuples match queries.
A \emph{residuated lattice}~\cite{Bel:FRS,GaJiKoOn:RL,Haj:MFL} is
a general algebra~\cite{Wec:UAfCS} of the form
\begin{align}
  \mathbf{L}=\langle L,\wedge,\vee,\otimes,\rightarrow,0,1\rangle
  \label{eqn:reslat}
\end{align}
such that $\langle L,\wedge,\vee,0,1 \rangle$ is
a bounded lattice~\cite{Bir:LT} with $0$ and $1$ being the least and
the greatest element of $L$, respectively;
$\langle L,\otimes,1 \rangle$ is a commutative monoid
(i.e., $\otimes$  is commutative, associative,
and $a \otimes 1 = 1 \otimes a = a$ for each $a \in L$);
$\otimes$ (a multiplication) and $\rightarrow$ (a residuum)
satisfy the \emph{adjointness property}:
\begin{align}
  a \otimes b \leq c \text{ if{}f } a \leq b \rightarrow c
  \label{eqn:adj}
\end{align}
for each $a,b,c \in L$ where $\leq$ is the order induced by the
lattice structure of $\mathbf{L}$ (i.e., $a \leq b$ if{}f $a = a \wedge b$).
A residuated lattice~\eqref{eqn:reslat} is called \emph{complete}
if its lattice part is a complete lattice,
i.e., if $L$ contains infima (greatest lower bounds) and suprema
(least upper bounds) of arbitrary subsets of $L$.
The multiplication $\otimes$ and its adjoint residuum $\rightarrow$
can be seen as general aggregation functions which interpret general
``conjunction'' and ``implication'' of scores, respectively. That is,
if a tuple matches query $Q_1$ with a score $a_1$ and
it also matches query $Q_2$ with a score $a_2$, then
$a_1 \otimes a_2$ may be interpreted as the score to which the tuple
matches the composed \emph{conjunctive query}
``$Q_1 \mathop{\text{\emph{and}}} Q_2$.''
This way the aggregation function is understood in~\cite{Fa98:CFIfMS}.
In a similar way, $a_1 \rightarrow a_2$ may be interpreted as the score
to which the tuple matches the composed \emph{conditional query}
``$\mathop{\text{\emph{if}}} Q_1 \mathop{\text{\emph{then}}} Q_2$.''

A typical choice of a complete residuated lattice $\mathbf{L}$ is a structure
given by a left-continuous triangular norm~\cite{KMP:TN}.
That is, $L=[0,1]$ (real unit interval), $\wedge$ and $\vee$ are minimum and
maximum (in which case the induced $\leq$ is the genuine ordering of reals),
and $\otimes$ is a left-continuous triangular norm.
The left-continuity of $\otimes$ ensures there is a residuum $\rightarrow$
satisfying \eqref{eqn:adj} which is in addition uniquely given by
\begin{align}
  a \rightarrow b =
  \textstyle\bigvee\bigl\{c \in L \,|\, a \otimes c \leq b\bigr\}.
  \label{eqn:resd_from_tnorm}
\end{align}
In words, \eqref{eqn:resd_from_tnorm} says that $a \rightarrow b$ is the
supremum of all $c \in L$ such that $a \otimes c \leq b$ (it can be shown
that $a \rightarrow b$ is in fact the greatest $c \in L$ satisfying such
property).

From pragmatic standpoints, the most important complete residuated lattices
are exactly those on the real unit interval given by \emph{continuous}
triangular norms.
All such structures can be obtained by constructing ordinal
sums~\cite{Bel:FRS,Haj:MFL,KMP:TN} of (isomorphic copies of)
three basic pairs of multiplications (and their corresponding residua):
$a \otimes b = \max(a + b - 1, 0)$ (\L ukasiewicz multiplication),
$a \otimes b = \min(a,b)$ (G\"odel or minimum multiplication),
$a \otimes b = a \cdot b$ (Goguen or product multiplication).

\begin{remark}
  The role of residuated lattices as general structure of truth degrees in
  truth-functional logics has been recognized by Goguen~\cite{Gog:Lic}.
  Important logics based on subclasses of residuated lattices include
  H\"ohle's monoidal logic~\cite{Ho:ML}, Basic Logic~\cite{Haj:MFL},
  and Monoidal T-norm Logic~\cite{EsGo:MTL}.
  Note that the truth-functionality is a crucial property which is
  not present in other models which also involve ranks like the probabilistic
  extensions of the Codd model, see~\cite{DaReSu:Pdditd}.
  In fact, the probabilistic databases tackle completely different issues and
  deal with \emph{uncertain data} which is \emph{not} our case because
  the approaches we discuss here deal with certain data and imperfect
  matches of queries.
\end{remark}

An important aspect of the relational model which is relevant to our paper
is that the classic relational model is based on the classic predicate
logic~\cite{Dat:DRM}. As a result, finite relations (informally
represented by ``data tables'') are used to represent both the base data
and results of queries. In fact, database instances (i.e., collections
of relations interpreting relational symbols/variables) can be seen as
predicate structures~\cite{Me87}, predicate formulas can be seen as queries,
and their interpretation in database instances corresponds to query evaluation.
Thus, the structures of truth values of the classical
predicate logic---the Boolean
algebras, are vital for the model and, loosely speaking, determine laws that
hold in the relational model.

The model we use in this paper can be seen as a relational
model of data which results from the classic one by replacing the Boolean
algebras with complete residuated lattices. This change has, of course,
its implications. First, we shift from structures with only yes/no matches
to structures which allow us to work with general (intermediate)
degrees---this is a desirable property for development of a rank-aware model.
Second, some laws that hold in the classic model are no longer valid
(e.g., \emph{tertium non datur}).
The second point shall be understood as virtue of the model rather than
a vice---note that Basic Logic extended by \emph{tertium non datur}
collapses into the classical logic~\cite{Haj:MFL}. In fact,
there are no proper fuzzy logics which satisfy
\emph{tertium non datur}. Our rationale for using (complete)
residuated lattices as the structures of degrees is that they represent
more general structures than the Boolean algebras which allow us to deal with
intermediate degrees and are still reasonably strong.

\begin{remark}
  Let us note that logics based on residuated lattices are used to reason about
  general scores. If $0$ and $1$ are used as the only scores, the logic collapses
  into the classic Boolean logic which is a desirable property. Also, the
  structures and operations of the generalized model can be implemented inside
  the classic relational model using the ordinary notions of relations on relation
  schemes and additional operations with relations.
\end{remark}

\subsection{Attributes, Types, and Ranked Data Tables}
In this section, we present our counterpart to the classic relations on
relation schemes. We utilize the following notions.
We denote by $Y$ a (infinite denumerable) set of \emph{attributes},
any finite subset $R \subseteq Y$
is called a \emph{relation scheme.} For each attribute $y \in Y$ we consider
its \emph{type $D_y$} which is understood as the admissible set of values of
the attribute $y$, see~\cite{DaDa:3rdM} (note that in earlier literature,
types are called domains, cf.~\cite{Co:Armodflsdb}).
In the paper, we do not refer to types explicitly,
i.e., whenever we introduce an attribute,
we tacitly consider its type and for simplicity we assume
that attributes with the same name have the same type.

We utilize the usual set-theoretic representation of tuples:
A direct product $\prod_{y \in R}D_r$ of an $R$-indexed system
$\{D_y \,|\, y \in R\}$ is a set of all maps
\begin{align}
  r\!: \textstyle R \to \bigcup_{y \in R}D_y
\end{align}
such that $r(y) \in D_y$ for each $y \in R$. If $R \subseteq Y$ is finite,
then each $r \in \prod_{y \in R}D_y$ is called a \emph{tuple}
on relation scheme $R$, $r(y)$ is called the \emph{$y$-value of $r$.}
For brevity, $\prod_{y \in R}D_y$ is denoted by $\Tupl{R}$.
For $S \subseteq R$ and $r \in \Tupl{R}$, we denote by $r(S)$
the \emph{projection} of $r$ onto $S$, i.e., $r(S) \subseteq r$
such that $\langle y,d\rangle \in r(S)$ for some $d \in D_y$
if{}f $y \in S$.
In particular, $r(\emptyset) \in \Tupl{\emptyset} = \{\emptyset\}$,
i.e., $\emptyset$ is the only tuple on the empty relation scheme.
Moreover, if $r \in \Tupl{R}$, $s \in \Tupl{S}$, and
$r(R \cap S) = s(R \cap S)$, we call the set-theoretic union
$r \cup s$ the \emph{join} of tuples $r$ and $s$ and denote it by $rs$.

The relations (on relation scheme $R$) which appear in the
classic model are finite subsets of $\Tupl{R}$. Technically, such
subsets can be identified with \emph{indicator functions} which assign $1$
to finitely many tuples from $\Tupl{R}$ (to those belonging to the
relation) and $0$ otherwise. Our counterpart to relations on relation
schemes result by considering such indicator functions with codomains
being the set of degrees from complete residuated lattices.

\begin{definition}\upshape%
  Let $\mathbf{L}$ be a complete residuated lattice, $R$ be a relation scheme.
  A \emph{ranked data table} on relation scheme (shortly, an RDT) is any map
  of the form $\mathcal{D}\!: \Tupl{R} \to L$ such that
  $\{r \in \Tupl{R} \,|\, \mathcal{D}(r) > 0\}$,
  called the \emph{answer set} of $\mathcal{D}$, is finite.
  The degree $\mathcal{D}(r)$ is called
  the \emph{score} of $r$ in $\mathcal{D}$.
\end{definition}

\begin{remark}\label{rem:dee}
  (a)
  Important special cases of RDTs are represented by RDTs on
  the \emph{empty relation scheme.}
  Recall that in the classic model~\cite{DaDa:3rdM},
  there are only two relations on $\emptyset$,
  namely the empty relation on $\emptyset$
  (called \verb|TABLE_DUM| in~\cite{DaDa:3rdM})
  and the relation on $\emptyset$ containing
  the empty tuple (called \verb|TABLE_DEE|).
  In our case, all RDTs on the empty scheme are maps of the form
  $\mathcal{D}\!: \{\emptyset\} \to L$, i.e., they are uniquely given
  by the degree $\mathcal{D}(\emptyset) \in L$, i.e., by the degree which is
  assigned to $\emptyset$ (the empty tuple) by $\mathcal{D}$. Because of
  this correspondence, for each degree $a \in L$, we define
  $\Dee{a}\!: \Tupl{\emptyset} \to L$ as the RDT such
  that $\Dee{a}(\emptyset) = a$. Hence, in addition to \verb|TABLE_DUM|
  ($\Dee{0}$ in our notation) and \verb|TABLE_DEE| ($\Dee{1}$ in our notation)
  our model admits general \verb|DEE|-like RDTs for every $a \in L$, leaving
  $\Dee{0}$ and $\Dee{1}$ as two borderline cases. As it is argued
  in~\cite{DaDe:DErd}, special cases of divisions which involve
  \verb|TABLE_DUM| and \verb|TABLE_DEE| are important and have been often
  neglected in various approaches to division, which in consequence led to
  divisions with undesirable properties. In our case, the DEE-like tables
  $\Dee{a}$ play analogous important role and shall be taken into account.

  (b)
  RDTs on non-empty relation schemes can be depicted analogously as
  classic relations on non-empty relation schemes by
  two-dimensional data tables with columns
  corresponding to attributes and rows corresponding to tuples. In addition,
  each row in the table is annotated by the score of the tuple represented by
  the row (tuples with zero scores are not shown in the table).

  (c)
  If $\mathcal{D}(r) \in \{0,1\}$ for all $r \in \Tupl{R}$,
  we call $\mathcal{D}$ \emph{non-ranked.} Clearly, non-ranked RDTs are in
  a one-to-one correspondence with (finite) relations on relation schemes in
  the usual sense. A particular case of a non-ranked table is $0_R$ called
  the \emph{empty} table and satisfying $0_R(r) = 0$ for all
  $r \in \Tupl{R}$.
\end{remark}

\subsection{Relational Operations}
By virtue of the close connection to logics based on residuated structures
of degrees, the rank-aware model we consider admits two basic types of
domain independent query systems~\cite{BeVy:TODS}.
First, a system based on evaluating predicate formulas. Second, a system
consisting of relational operations which has the same expressive power as
the former one. The relational divisions considered in this paper are
particular (fundamental or derived) relational operations. In this subsection,
we recall a fragment of the relational operations we need to cope
with divisions.

For $\mathcal{D}_1$ and $\mathcal{D}_2$ on the same relation scheme $R$,
we define
$\mathcal{D}_1 \cap \mathcal{D}_2$ (\emph{intersection}) and
$\mathcal{D}_1 \cup \mathcal{D}_2$ (\emph{union}) by
\begin{align}
  (\mathcal{D}_1 \cap \mathcal{D}_2)(r) &=
  \mathcal{D}_1(r) \wedge \mathcal{D}_2(r), \\
  (\mathcal{D}_1 \cup \mathcal{D}_2)(r) &=
  \mathcal{D}_1(r) \vee \mathcal{D}_2(r),
\end{align}
for all $r \in \mathrm{Tupl(R)}$.
In words, $\cap$ and $\cup$ are defined componentwise using
the lattice operations $\wedge$ and $\vee$ in $\mathbf{L}$.

The natural join in our model is introduced as follows.
If $\mathcal{D}_1$ is an RDT on relation scheme $R \cup S$ and
$\mathcal{D}_2$ is an RDT of relation scheme $S \cup T$ such that
$R \cap S = R \cap T = S \cap T = \emptyset$
(i.e., $R$, $S$, and $T$ are pairwise disjoint), then the
\emph{natural join} of $\mathcal{D}_1$ and $\mathcal{D}_2$ is an RDT
on relation scheme $R \cup S \cup T$
denoted by $\join{}{\mathcal{D}_1}{\mathcal{D}_2}$ and defined by
\begin{align}
  \bigl(\join{}{\mathcal{D}_1}{\mathcal{D}_2}\bigr)(rst) &=
  \mathcal{D}_1(rs) \otimes \mathcal{D}_2(st),
  \label{eqn:natjoin}
\end{align}
for each $r \in \Tupl{R}$, $s \in \Tupl{S}$,
and $t \in \Tupl{T}$. Hence, $\otimes$ in $\mathbf{L}$ acts as a conjunctive
aggregator which generalizes the classic conjunction appearing in the definition
of ordinary natural join of relations.
If $\mathcal{D}$ is an RDT on $R$,
the \emph{projection} of $\mathcal{D}$ onto $S \subseteq R$ is
denoted by $\project{S}{\mathcal{D}}$ and defined by
\begin{align}
  (\project{S}{\mathcal{D}})(s) &=
  \textstyle\bigvee_{\!t \in \Tupl{R \setminus S}}\mathcal{D}(st),
  \label{def:project}
\end{align}
for each $s \in \Tupl{S}$. Using projections of tuples onto $S$,
we may write~\eqref{def:project} equivalently as
$(\project{S}{\mathcal{D}})(s) =
\bigvee\{\mathcal{D}(r) \,|\, r(S) = s\}$.
Since $\otimes$ is distributive over~$\bigvee$,
we may introduce a \emph{semijoin} of $\mathcal{D}_1$ on $R$
and $\mathcal{D}_2$ on $S$ as
$\project{R}{\join{}{\mathcal{D}_1}{\mathcal{D}_2}}$
or equivalently as
$\join{}{\mathcal{D}_1}{\project{R \cap S}{\mathcal{D}_2}}$
and we denote it $\mathcal{D}_1 \ltimes \mathcal{D}_2$.

Analogously as in the classic model, semijoins in our model are
important since they allow us to algebraically express existential
queries of the form of categorical propositions ``some $\varphi$ is $\psi$''
or, in the database terminology~\cite{DaDa:3rdM},
``some tuples from $\mathcal{D}_1$ are
\emph{matching} tuples in $\mathcal{D}_2$''.

\begin{remark}
  (a)
  One may check that if all arguments to the above-mentioned operations
  are non-ranked, then the results of relational operations coincide with
  the results of the classic relational operations of union, intersection,
  natural join, and projection~\cite{Co:Armodflsdb,DaDa:3rdM}.

  (b)
  Let us comment on the role of the general suprema in~\eqref{def:project}.
  In predicate logics based on residuated structures of
  degrees~\cite{CiHa:Tnbpfl}, general suprema are used to interpret
  existentially quantified formulas. In a more detail, for a formula of the
  form $(\exists x)\varphi$, its truth degree
  $||(\exists x)\varphi||_{\mathbf{M},v}$ in
  the $\mathbf{L}$-structure $\mathbf{M}$ under the evaluation $v$ of
  object variables is defined as the \emph{supremum} of all truth degrees
  $||\varphi||_{\mathbf{M},w}$ where $w(y) = v(y)$ for each
  variable $y$ such that $y \ne x$. Put in words,
  $||(\exists x)\varphi||_{\mathbf{M},v}$ is the least upper bound of all
  degrees to which $\varphi$ is true in $\mathbf{M}$ considering $x$ as
  a variable which can be assigned any value from the universe of $\mathbf{M}$.
  Note that if $\mathbf{L}$ is the two-element Boolean algebra, this
  interpretation coincides exactly with the usual interpretation of
  existentially quantified formulas and, in particular,
  $||(\exists x)\varphi||_{\mathbf{M},v} = 1$ if{}f there is $w$
  such that $||\varphi||_{\mathbf{M},w} = 1$ and $w(y) = v(y)$
  for all $y \ne x$ (i.e., $x$ can be assigned a value which
  makes $\varphi$ true in $\mathbf{M}$). Now, since projections are
  relational operations which express queries formulated by existentially
  quantified formulas in relational calculi, \eqref{def:project}
  is defined in terms of~$\bigvee$. In words, $(\project{S}{\mathcal{D}})(s)$
  is a degree to which \emph{there is} a tuple in $\mathcal{D}$ whose projection
  onto $S$ equals to $s$.
\end{remark}

\section{Existing and New Approaches to Division}\label{sec:appr}
In this section, we review several classic approaches to division which
appeared in the literature on database systems, present their rank-aware
counterparts, and comment on their relationship to the existing rank-aware
or fuzzy approaches in databases. The section is structured into subsections
which roughly follow the structure of~\cite{DaDe:DErd} which is arguably
the best comparison of division-like operations from the point of view of the
relational model of data.

In this section, whenever we say that (a relation or an RDT) $\mathcal{D}$
is on scheme $RS$, we mean that it is defined on the scheme $R \cup S$
such that $R \cap S = \emptyset$.

\subsection{Codd-style Division}
Historically, the Codd division is the initial operation in the family of
division-like operations. Its initial purpose was technical---to ensure
completeness of the relational algebra with respect to the relational
calculus which allows us to express queries involving universal quantification.
Strictly speaking, its presence in the relational algebra is not necessary
since in the classical logic, universally quantified formulas of the from
$(\forall x)\varphi$ can be replaced by formulas 
$\blneg(\exists x)\blneg\varphi$,
i.e., universal quantifiers are expressible by means of negations and
existential quantification. Thus, the division is considered as
a derived operation which is expressed by means of set-theoretic
difference (relational counterparts to negations) and projections
(relational counterparts to existential quantification).

Namely, for a relation $\mathcal{D}_1$ on $RS$ and relation
$\mathcal{D}_2$ on $S$, the Codd division
$\mathcal{D}_1 \div_\mathrm{Codd} \mathcal{D}_2$
may be introduced~\cite{DaDe:DErd} as
\begin{align}
  \mathcal{D}_1 \div_\mathrm{Codd} \mathcal{D}_2
  &=
  \project{R}{\mathcal{D}_1} \setminus
  \project{R}{(\join{}{\project{R}{\mathcal{D}_1}}{\mathcal{D}_2})
    \setminus \mathcal{D}_1},
  \label{eqn:Codd-}
\end{align}
where $\pi_R$, $\bowtie$, and $\setminus$ denote the usual projection,
natural join (cross join in this particular case), and set-theoretic difference,
respectively.
The survey chapter~\cite{DaDe:DErd} identifies several epistemic issues
of~\eqref{eqn:Codd-}. The most important are:
\begin{enumerate}
\item[\itm{1}]
  Unlike semijoins, \eqref{eqn:Codd-} is restricted to relations on particular
  schemes, i.e., the operation cannot be performed with relations on arbitrary
  schemes which makes it less general (and less useful).
\item[\itm{2}]
  The meaning of~\eqref{eqn:Codd-} does not faithfully correspond to the
  categorical proposition ``all $\varphi$ are $\psi$''. If $\varphi$
  is $s \in \mathcal{D}_2$ and $\psi$ is $rs \in \mathcal{D}_1$, then
  \begin{align}
    (\forall s)(s \in \mathcal{D}_2 \blRightarrow rs \in \mathcal{D}_1)
    \label{eqn:foralls}
  \end{align}
  is true for all $r \in \Tupl{R}$ provided that $\mathcal{D}_2$ is empty.
  In contrast, the result of~\eqref{eqn:Codd-} is always a subset of
  $\project{R}{\mathcal{D}_1}$. Hence, in general,
  the meaning of~\eqref{eqn:Codd-} is
  ``any \emph{$r$ in $\project{R}{\mathcal{D}_1}$}
  such that $rs \in \mathcal{D}_1$
  for all $s \in \mathcal{D}_2$'' rather than ``any $r$ such that
  $rs \in \mathcal{D}_1$ for all $s \in \mathcal{D}_2$'', cf.~\cite{DaDe:DErd}.
  As a consequence, \eqref{eqn:Codd-} is equivalent to
  \begin{align}
    \mathcal{D}_1 \div \mathcal{D}_2 &=
    \bigl\{
    r \in \project{R}{\mathcal{D}_1} \,|\,
    \text{for all } s \in \mathcal{D}_2
    \text{, we have }
    rs \in \mathcal{D}_1
    \bigr\},
    \label{eqn:with_range}
  \end{align}
  where $\project{R}{\mathcal{D}_1}$
  can be seen as the \emph{range} for the division.
\end{enumerate}

By a direct generalization of~\eqref{eqn:Codd-} in rank-aware approaches,
we inherit both the issues. In addition, it is questionable how to handle
$\setminus$ in the presence of scores. One way to go is to consider
$(\mathcal{D}_1 \setminus \mathcal{D}_2)(r)$ to be the degree to which
$r$ is in $\mathcal{D}_1$ and \emph{is not} in $\mathcal{D}_2$ and express the
negation using $\rightarrow$ and $0$, i.e.,
\begin{align}
  (\mathcal{D}_1 \setminus \mathcal{D}_2)(r) &=
  \mathcal{D}_1(r) \otimes (\mathcal{D}_2(r) \rightarrow 0).
  \label{eqn:rminus}
\end{align}
Although $\mathcal{D}_1 \setminus \mathcal{D}_2$ is always finite,
it does not fulfill basic properties one would expect for a difference.
For instance, $\mathcal{D}_1 \setminus \mathcal{D}_2 = 0_R$
does not imply $\mathcal{D}_1 \subseteq \mathcal{D}_2$ in general.
Alternatively, one may introduce $\setminus$ as an independent
fundamental connective in $\mathbf{L}$ and induce the difference of
RDTs componentwise analogously as $\cap$ or $\cup$. For instance, one
may use commutative doubly-residuated lattices~\cite{OrRa:Rmcs}
with $\setminus$ being adjoint to a non-idempotent disjunction.
Note that difference-like operations with relations (with scores) in
the database literature are often defined analogously
as~\eqref{eqn:rminus}, usually on $L=[0,1]$ with $\otimes$ being
the minimum and $\rightarrow$ being
the \L ukasiewicz implication~\cite{PiBo:book}.
The general issue with graded style-versions of \eqref{eqn:Codd-}
is that universal quantifier (interpreted by infima in $\mathbf{L}$)
is not definable using the existential one (interpreted by suprema
in $\mathbf{L}$).

Most common truth-functional
approaches~\cite{Bo:Opdfr,BoPiRo:Crdfr,BoDuPiPr:Fqrdedo,DuPr:Sqofrd}
that can be found in literature
on rank-aware extensions generalize~\eqref{eqn:foralls} by putting
\begin{align}
  (\mathcal{D}_1 \div \mathcal{D}_2)(r) &=
  \textstyle\bigwedge_{s \in \Tupl{S}}
  \bigl(\mathcal{D}_2(s) \rightarrow \mathcal{D}_1(rs)\bigr)
  \label{eqn:gradeCodd_dd}
\end{align}
for all $r \in \mathrm{Tupl}(R)$ provided that $\mathcal{D}_1$ and
$\mathcal{D}_2$ are RDTs on schemes $RS$ and $S$, respectively.
In our setting, $\bigwedge$ is the operation of infimum in $\mathbf{L}$,
and $\rightarrow$ is the residuum in $\mathbf{L}$. The above-cited
approaches often use a fixed scale of degrees (with $L=[0,1]$)
with $\rightarrow$ being a general truth function of implication.
In addition to $\rightarrow$ which
are adjoint to $\otimes$ (so-called R-implications),
the approaches use S-implications~\cite{BoPiRo:Crdfr}.
We do not want to endorse this concept here because of its marginal role
in fuzzy logics in the \emph{narrow sense}, see~\cite{Got:Mfl} and the
soundness issues regarding S-implications.

\begin{remark}
  Observe that since $r \in \mathrm{Tupl}(R)$, \eqref{eqn:gradeCodd_dd}
  solves issue \itm{2} but this is at the expense of
  losing domain independence.
  Indeed, if $R$ contains an attribute which has a type consisting of
  infinitely many values then the result $\mathcal{D}_1 \div \mathcal{D}_2$
  defined by \eqref{eqn:gradeCodd_dd} is \emph{infinite} which is highly undesirable
  property from the database viewpoint---if a materialization of
  $\mathcal{D}_1 \div \mathcal{D}_2$ is necessary in order to evaluate
  a compound query involving the division, the evaluation cannot
  be performed (in finitely many steps). Probably because of this issue,
  some of the graded approaches cited above use~\eqref{eqn:gradeCodd_dd}
  assuming that $(\project{R}{\mathcal{D}_1})(r) > 0$ which,
  unfortunately, introduces \itm{2} again.
\end{remark}

In our previous work~\cite{BeVy:Qssbd}, we have used a fundamental
domain-dependent division operation which is sufficient to establish
the equivalence between a domain-dependent relational algebra and
a domain relational calculus.
Recently, we have proposed a domain independent variant~\cite{BeVy:TODS}
with explicit \emph{range} which is used to establish the equivalence between
a domain-independent relational algebra and a domain relational
calculus with range declarations.
The operation is defined as follows.

Let $\mathcal{D}_1$, $\mathcal{D}_2$, and $\mathcal{D}_3$ be RDTs
on $RS$, $S$, and $R$, respectively.
Then, a \emph{division} $\rdiv{\mathcal{D}_3}{\mathcal{D}_1}{\mathcal{D}_2}$
of $\mathcal{D}_1$ by $\mathcal{D}_2$ which ranges over $\mathcal{D}_3$ is
an RDT on $R$ defined by
\begin{align}
  \bigl(\rdiv{\mathcal{D}_3}{\mathcal{D}_1}{\mathcal{D}_2}\bigr)(r) =
  \textstyle\bigwedge_{s \in \mathrm{Tupl}(S)}
  \bigl(
  \mathcal{D}_3(r) \otimes (\mathcal{D}_2(s) \rightarrow  \mathcal{D}_1(rs))
  \bigr),
  \label{def:rdiv}
\end{align}
for each $r \in \mathrm{Tupl}(R)$.
Clearly, $\rdiv{\mathcal{D}_3}{\mathcal{D}_1}{\mathcal{D}_2} \subseteq
\mathcal{D}_3$. In addition, \eqref{def:rdiv} possesses further desirable
properties. For instance, if $\mathcal{D}_3$ is non-ranked,
then $\rdiv{\mathcal{D}_3}{\mathcal{D}_1}{\mathcal{D}_2}$
is the greatest among all $\mathcal{D} \subseteq \mathcal{D}_3$ such that
$\join{}{\mathcal{D}}{\mathcal{D}_2} \subseteq \mathcal{D}_1$.
Furthermore, if $R = \emptyset$ and $\mathcal{D}_3 = \Dee{1}$
(see Remark~\ref{rem:dee}), then \eqref{def:rdiv} becomes
(the relational representation of) the subsethood degree
of $\mathcal{D}_2$ in $\mathcal{D}_1$, see~\cite{Bel:FRS}.
Also, the definition eliminates \itm{2} and is domain independent.

\subsection{Date's Small Divide (Original and Generalized)}
In order to overcome issue \itm{2}, Date (see~\cite{DaDe:DErd} and the
references therein) proposed a Small Divide operation. Consider the
following relations on relation schemes:
$\mathcal{D}_1$ on $R$ (called the \emph{dividend}),
$\mathcal{D}_2$ on $S$ (called the \emph{divisor}),
$\mathcal{D}_3$ on $RS$ (called the \emph{mediator}).
Then, the original version of
Small Divide~\cite{DaDe:DErd} is
\begin{align}
  \mathcal{D}_1 \div^{\mathcal{D}_3}_\mathrm{sdo} \mathcal{D}_2
  &=
  \mathcal{D}_1 \setminus
  \project{R}{(\join{}{\mathcal{D}_1}{\mathcal{D}_2})
    \setminus \mathcal{D}_3} \notag \\
  &=
  \bigl\{r \in \mathcal{D}_1 \,|\,
  \text{for all } s \in \mathcal{D}_2
  \text{, we have }
  rs \in \mathcal{D}_3
  \bigr\}.
  \label{eqn:SmallOriginal}
\end{align}
A graded generalization
of~\eqref{eqn:SmallOriginal} is
\begin{align}
  \bigl(
  \mathcal{D}_1 \div^{\mathcal{D}_3}_\mathrm{gsdo} \mathcal{D}_2
  \bigr)(r) =
  \mathcal{D}_1(r) \otimes
  \textstyle\bigwedge_{s \in \mathrm{Tupl}(S)}
  \bigl(
  \mathcal{D}_2(s) \rightarrow  \mathcal{D}_3(rs)\bigr)
  \label{def:gSmallOriginal}
\end{align}
with $\mathcal{D}_1$, $\mathcal{D}_2$, and $\mathcal{D}_3$ being RDTs
on $R$, $S$, and $RS$, respectively. The graded variant of the Small
Divide and~\eqref{def:rdiv} are equivalent under the following conditions:

\begin{theorem}
  If\/ $\mathbf{L}$ is prelinear or divisible, then
  $\rdiv{\mathcal{D}_1}{\mathcal{D}_3}{\mathcal{D}_2} =
  \mathcal{D}_1 \div^{\mathcal{D}_3}_\mathrm{gsdo} \mathcal{D}_2$.
\end{theorem}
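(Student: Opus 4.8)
The plan is to fix a tuple $r \in \Tupl{R}$ and reduce the claimed identity to a single algebraic law about the multiplication $\otimes$. Writing $a = \mathcal{D}_1(r)$ and, for $s \in \Tupl{S}$, $b_s = \mathcal{D}_2(s) \rightarrow \mathcal{D}_3(rs)$, the left-hand side $\bigl(\rdiv{\mathcal{D}_1}{\mathcal{D}_3}{\mathcal{D}_2}\bigr)(r)$ unfolds by~\eqref{def:rdiv} (with range $\mathcal{D}_1$ on $R$, dividend $\mathcal{D}_3$ on $RS$, divisor $\mathcal{D}_2$ on $S$) to $\bigwedge_{s \in \Tupl{S}} (a \otimes b_s)$, while the right-hand side $\bigl(\mathcal{D}_1 \div^{\mathcal{D}_3}_\mathrm{gsdo} \mathcal{D}_2\bigr)(r)$ unfolds by~\eqref{def:gSmallOriginal} to $a \otimes \bigwedge_{s \in \Tupl{S}} b_s$. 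Thus the theorem is exactly the distributivity of the fixed factor $a$ over the meet, namely $\bigwedge_s (a \otimes b_s) = a \otimes \bigwedge_s b_s$.

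One inequality is free: since $\bigwedge_s b_s \le b_s$ for every $s$ and $\otimes$ is isotone, $a \otimes \bigwedge_s b_s \le a \otimes b_s$ for all $s$, whence $a \otimes \bigwedge_s b_s \le \bigwedge_s (a \otimes b_s)$ in every complete residuated lattice. The first real step is to make the meet finite. Because the answer set of $\mathcal{D}_2$ is finite, we have $\mathcal{D}_2(s) = 0$ and therefore $b_s = 0 \rightarrow \mathcal{D}_3(rs) = 1$ for all but finitely many $s$; the terms equal to $1$ do not affect the meet, so $\bigwedge_s b_s$ is in fact a finite meet. By induction on the number of indices with $b_s \ne 1$ it then suffices to prove the binary law $(a \otimes b) \wedge (a \otimes c) \le a \otimes (b \wedge c)$, the reverse being the free inequality above.

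For the prelinear case I would argue directly. Using prelinearity in the form $(b \rightarrow c) \vee (c \rightarrow b) = 1$ together with the distributivity of $\otimes$ over $\vee$, put $t = (a \otimes b) \wedge (a \otimes c)$ and expand $t = t \otimes \bigl((b \rightarrow c) \vee (c \rightarrow b)\bigr) = \bigl(t \otimes (b \rightarrow c)\bigr) \vee \bigl(t \otimes (c \rightarrow b)\bigr)$. In the first disjunct, $t \le a \otimes b$ gives $t \otimes (b \rightarrow c) \le a \otimes \bigl(b \otimes (b \rightarrow c)\bigr) \le a \otimes (b \wedge c)$, using $b \otimes (b \rightarrow c) \le b \wedge c$; the second disjunct is bounded symmetrically, and taking the join yields $t \le a \otimes (b \wedge c)$.

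For the divisible case the same reduction to the binary law applies, and one rewrites $a \otimes (b \wedge c) = (a \otimes b) \otimes (b \rightarrow c)$ via divisibility. The hard part lies here: the tempting cancellation, i.e.\ deducing the inequality from $\bigl((a \otimes b) \rightarrow (a \otimes c)\bigr) \le (b \rightarrow c)$, is false in general (take $a = 0$), so one cannot simply strip off the common factor $a \otimes b$. Instead I would invoke the characteristic property of divisible residuated lattices that $\otimes$ distributes over finite meets, which can be cited from~\cite{Bel:FRS} or derived from the divisibility axiom by the standard argument. I expect this divisible direction to be the main obstacle, whereas the prelinear direction admits the clean self-contained proof above.
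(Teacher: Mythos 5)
Your proof is correct and follows essentially the same route as the paper's: reduce pointwise to the identity $a \otimes \bigwedge_{s} b_s = \bigwedge_{s} (a \otimes b_s)$, observe that only finitely many $b_s$ differ from $1$ because the answer set of $\mathcal{D}_2$ is finite, and invoke distributivity of $\otimes$ over finite meets, which holds under prelinearity or divisibility. The paper merely cites that distributivity law from the literature for both hypotheses, whereas you prove the prelinear case explicitly and cite only the divisible one (which, incidentally, also admits a short direct derivation from $x \wedge y = x \otimes (x \rightarrow y)$), so your write-up is a strict refinement rather than a different argument.
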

\begin{proof}
  Either of prelinearity or divisibility ensures that
  $a \otimes (b \wedge c) = (a \otimes b) \wedge (a \otimes c)$
  for all $a,b,c \in L$, see~\cite{Bel:FRS,EsGo:MTL,Haj:MFL}.
  In addition, since $\mathcal{D}_2$ and $\mathcal{D}_3$
  are finite, in both~\eqref{def:gSmallOriginal} and~\eqref{def:rdiv} the
  infimum is computed using only finitely many degrees other than $1$, i.e.,
  the claim follows by distributivity of $\otimes$ over infima of
  finitely many degrees which are pairwise distinct.
\end{proof}

\noindent
Note that analogous observation holds if $\mathbf{L}$ is arbitrary
and $\mathcal{D}_1$ is non-ranked.

\begin{remark}
  The previous observation has two important consequences: In the mainstream
  fuzzy logics (based on prelinear residuated lattices), graded Small Divide
  and~\eqref{def:rdiv} are equivalent. In particular, if $\mathbf{L}$ is the
  two-element Boolean algebra, the ranked model becomes the classic one, i.e.,
  this observation pertains to the classic relational model.
\end{remark}

In order to cope with issue \itm{1}, the original Small Divide has been
further extended to accomodate relations on more general schemes. Namely,
for $\mathcal{D}_1$ on $RT$, $\mathcal{D}_2$ on $SU$, and
$\mathcal{D}_3$ on $RSV$, Date introduced~\cite{DaDe:DErd}
a general form of Small Divide as follows:
\begin{align}
  \mathcal{D}_1 \div^{\mathcal{D}_3}_\mathrm{sd} \mathcal{D}_2
  &=
  \mathcal{D}_1 \nltimes
  ((\join{}{\project{R}{\mathcal{D}_1}}{\project{S}{\mathcal{D}_2}})
    \nltimes \mathcal{D}_3).
    \label{eqn:Small}
\end{align}
where $\nltimes$ denotes the \emph{semidifference}, i.e.,
$\mathcal{D} \nltimes \mathcal{D}' =
\mathcal{D} \setminus (\mathcal{D} \ltimes \mathcal{D}')$.
By moment's reflection, we derive that
\begin{align}
  \mathcal{D}_1 \div^{\mathcal{D}_3}_\mathrm{sd} \mathcal{D}_2
  &=
  \bigl\{rt \in \mathcal{D}_1 \,|\,
  \text{for all } s \in \project{S}{\mathcal{D}_2}
  \text{, we have }
  rs \in \project{RS}{\mathcal{D}_3}
  \bigr\}.
\end{align}
We may therefore introduce the following operation in the graded setting
\begin{align}
  \bigl(
  \mathcal{D}_1 \div^{\mathcal{D}_3}_\mathrm{gsd} \mathcal{D}_2
  \bigr)(rt) =
  \mathcal{D}_1(rt) \otimes
  \textstyle\bigwedge_{s \in \mathrm{Tupl}(S)}
  \bigl(
  (\project{S}{\mathcal{D}_2})(s) \rightarrow
  (\project{RS}{\mathcal{D}_3})(rs)\bigr)
  \label{def:gSmall}
\end{align}
provided that $\mathcal{D}_1$, $\mathcal{D}_2$, and $\mathcal{D}_3$ are RDTs
on $RT$, $SU$, $RSV$, respectively. As in the classic setting,
$\div_\mathrm{gsd}$ eliminates both the issues \itm{1} and \itm{2}
mentioned earlier.

\subsection{Todd-style Division}
An alternative approach to eliminate issue \itm{1} is the division proposed
by Todd, cf.~\cite{DaDe:DErd}. Written directly in the set notation,
\begin{align}
  \mathcal{D}_1 \div_\mathrm{Todd} \mathcal{D}_2 &=
  \{rt \in \mathcal{U}
  \,|\,
  \text{for all } s \in \Tupl{S}
  \text{: if }
  st \in \mathcal{D}_2
  \text{, then }
  rs \in \mathcal{D}_1\},
  \label{eqn:Todd}
\end{align}
where
$\mathcal{U} =
\join{}{\project{R}{\mathcal{D}_1}}{\project{T}{\mathcal{D}_2}}$.
Unfortunately, $\div_\mathrm{Todd}$ and its direct rank-aware generalizations
inherit the issue~\itm{2}. This is caused by the fact that the ranges for
$r$ and $t$ in~\eqref{eqn:Todd} are considered to be the projections of
$\mathcal{D}_1$ and $\mathcal{D}_2$, respectively. Interestingly,
if $\mathcal{U}$ is considered to be the set of all tuples on $RT$,
the graded generalization becomes
\begin{align}
  (\mathcal{D}_1 \div_\mathrm{gTodd} \mathcal{D}_2)(rt)
  &=
  \textstyle\bigwedge_{s \in \Tupl{S}}
  \bigl(\mathcal{D}_2(st) \rightarrow \mathcal{D}_1(rs)\bigr)
  \label{eqn:KB}
\end{align}
which is the Kohout-Bandler superproduct composition~\cite{BaKo:Sipfrp,Bel:FRS}
of fuzzy relations $\mathcal{D}_1$ and $\mathcal{D}_2$ (in this order).
As in the case of~\eqref{eqn:gradeCodd_dd},
$\div_\mathrm{gTodd}$ is domain dependent, i.e., even if $\mathcal{D}_1$
and $\mathcal{D}_2$ are finite, the result of~\eqref{eqn:KB} may be infinite
which is an undesirable property.

\subsection{Date's Great Divide}
In the same spirit as the Small Divide has been proposed to eliminate the
issues of the classic Codd division, the Great Divide has been proposed by
Date~\cite{DaDe:DErd} to deal with the issues of the Todd division. Again,
we may assume two variants of the operation---the original one and
the generalized one. For illustration, we focus here only on the original
variant, the generalized one can be obtained in much the same way as in
the case of the Small Divide.

According to~\cite{DaDe:DErd}, for relations
$\mathcal{D}_1$ on $R$ (called the \emph{dividend}),
$\mathcal{D}_2$ on $T$ (called the \emph{divisor}),
$\mathcal{D}_3$ on $RS$ (called the \emph{first mediator}),
and $\mathcal{D}_4$ on $ST$ (called the \emph{second mediator}),
we put
\begin{align}
  \mathcal{D}_1 \div^{\mathcal{D}_3,\mathcal{D}_4}_\mathrm{gdo} \mathcal{D}_2
  &=
  (\join{}{\mathcal{D}_1}{\mathcal{D}_2}) \nltimes
  ((\join{}{\mathcal{D}_1}{\mathcal{D}_4})
    \nltimes \mathcal{D}_3).
    \label{eqn:GreatOriginal}
\end{align}
The definition~\eqref{eqn:GreatOriginal} can be equivalently expressed
in the set notation as follows:
\begin{align}
  \mathcal{D}_1 \div^{\mathcal{D}_3,\mathcal{D}_4}_\mathrm{gdo} \mathcal{D}_2
  &=
  \bigl\{rt \in \mathcal{U} \,|\,
  \text{for all } s \in \Tupl{S}
  \text{: if }
  st \in \mathcal{D}_4
  \text{, then }
  rs \in \mathcal{D}_3
  \bigr\},
  \label{eqn:GreatOriginal_set}
\end{align}
where $\mathcal{U} = \join{}{\mathcal{D}_1}{\mathcal{D}_2}$.
Based on~\eqref{eqn:GreatOriginal_set}, we may introduce a graded variant
$\div_\mathrm{ggdo}$ of the original Great Divide as follows
\begin{align}
  \bigl(
  \mathcal{D}_1
  \div^{\mathcal{D}_3,\mathcal{D}_4}_\mathrm{ggdo}
  \mathcal{D}_2
  \bigr)(rt) &=
  \mathcal{D}_1(r) \otimes \mathcal{D}_2(t) \otimes
  \textstyle\bigwedge_{s \in \mathrm{Tupl}(S)}
  \bigl(
  \mathcal{D}_4(st) \rightarrow \mathcal{D}_3(rs)\bigr) \notag \\
  &=
  (\join{}{\mathcal{D}_1}{\mathcal{D}_2})(rt)
  \otimes
  \textstyle\bigwedge_{s \in \mathrm{Tupl}(S)}
  \bigl(
  \mathcal{D}_4(st) \rightarrow \mathcal{D}_3(rs)\bigr)
  \label{def:gGreatOriginal}
\end{align}
with $\mathcal{D}_1$, $\mathcal{D}_2$, $\mathcal{D}_3$,
and $\mathcal{D}_4$ being RDTs on $R$, $T$, $RS$, and $ST$, respectively.
Loosely speaking, \eqref{def:gGreatOriginal} can be seen as
a domain-independent variant of the Kohout-Bandler superproduct composition
whose range is limited to the natural join of $\mathcal{D}_1$
and $\mathcal{D}_2$.

Analogously as in the classic case, the graded Great Divide is more
general than the graded Small Divide. In particular, $\div_\mathrm{gsdo}$
can be seen as $\div_\mathrm{ggdo}$ with the divisior being the RDT $\Dee{1}$
on the empty relation scheme:

\begin{corollary}
  We have
  $\mathcal{D}_1 \div_\mathrm{gsdo}^{\mathcal{D}_3} \mathcal{D}_2 =
  \mathcal{D}_1 \div_\mathrm{ggdo}^{\mathcal{D}_3,\mathcal{D}_2} \Dee{1}$.
  \qed
\end{corollary}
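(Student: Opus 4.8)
The plan is to compute the right-hand side $\mathcal{D}_1 \div_\mathrm{ggdo}^{\mathcal{D}_3,\mathcal{D}_2} \Dee{1}$ directly from definition~\eqref{def:gGreatOriginal} and observe that it collapses to definition~\eqref{def:gSmallOriginal} of the left-hand side. This is essentially a bookkeeping argument about schemes together with one application of the monoid law.

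First I would pin down the schemes. In the general Great Divide $\mathcal{D}_1 \div^{\mathcal{D}_3,\mathcal{D}_4}_\mathrm{ggdo} \mathcal{D}_2$, the dividend is on $R$, the divisor on $T$, the first mediator $\mathcal{D}_3$ on $RS$, and the second mediator $\mathcal{D}_4$ on $ST$. Matching the instance $\mathcal{D}_1 \div_\mathrm{ggdo}^{\mathcal{D}_3,\mathcal{D}_2} \Dee{1}$ against this template, the divisor slot is filled by $\Dee{1}$ and the second-mediator slot $\mathcal{D}_4$ is filled by the RDT $\mathcal{D}_2$ of the Small Divide. Since $\Dee{1}$ is an RDT on the empty scheme (Remark~\ref{rem:dee}), the divisor scheme is $T = \emptyset$; hence $\mathcal{D}_4 = \mathcal{D}_2$ lives on $ST = S$ and the result lives on $RT = R$, exactly as required for $\div_\mathrm{gsdo}$. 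Moreover $\emptyset$ is the only tuple on $T$, so $rt = r$ for $r \in \Tupl{R}$ and $st = s$ for $s \in \Tupl{S}$.

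Substituting these identities into~\eqref{def:gGreatOriginal}, the factor $\mathcal{D}_2(t)$ of the divisor becomes $\Dee{1}(\emptyset) = 1$, and the residual term $\mathcal{D}_4(st) \rightarrow \mathcal{D}_3(rs)$ becomes $\mathcal{D}_2(s) \rightarrow \mathcal{D}_3(rs)$. Discarding the factor $1$ by the monoid law $a \otimes 1 = a$, the value at $r$ reduces to $\mathcal{D}_1(r) \otimes \bigwedge_{s \in \Tupl{S}} (\mathcal{D}_2(s) \rightarrow \mathcal{D}_3(rs))$, which is precisely $(\mathcal{D}_1 \div_\mathrm{gsdo}^{\mathcal{D}_3} \mathcal{D}_2)(r)$ by~\eqref{def:gSmallOriginal}.

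I do not expect any genuine obstacle: once the role of each argument slot is fixed, the computation is a single line. The only point that warrants care is the slot correspondence---that the outer $\mathcal{D}_2$ of the Small Divide occupies the second-mediator position of the Great Divide, not the divisor position, and that it is instead $\Dee{1}$ which sits in the divisor slot and forces the join over $T$ to contribute the neutral factor $1$.
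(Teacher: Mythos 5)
Your computation is correct and is exactly the argument the paper has in mind: the paper states this corollary with the proof omitted (just \qed) precisely because, as you show, setting the divisor to $\Dee{1}$ on the empty scheme forces $T=\emptyset$, turns the factor $\mathcal{D}_2(t)$ into $\Dee{1}(\emptyset)=1$, and reduces~\eqref{def:gGreatOriginal} to~\eqref{def:gSmallOriginal} by the monoid law. Your care with the slot correspondence (the Small Divide's $\mathcal{D}_2$ occupying the second-mediator position) is the one point worth spelling out, and you got it right.
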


As we have already mentioned, \eqref{def:gGreatOriginal} can be generalized
in a similar way as~\eqref{def:gSmall} to handle RDTs on more general
relational schemes.

\subsection{Darwen's Divide}
Later, Darwen \cite{DaDe:DErd} proposed another division-like
operation which is now commonly called Darwen's Divide.
This operation is defined similarly as Date's Great Divide
but it does not impose any requirements on the relation schemes of
its arguments.

The definition is as follows \cite{DaDe:DErd}. For relations
$\mathcal{D}_1$ on $R_1$ (called the \emph{dividend}),
$\mathcal{D}_2$ on $R_2$ (called the \emph{divisor}),
$\mathcal{D}_3$ on $R_3$ (called the \emph{first mediator}),
and $\mathcal{D}_4$ on $R_4$ (called the \emph{second mediator}),
we put
\begin{align}
  \mathcal{D}_1 \div^{\mathcal{D}_3,\mathcal{D}_4}_\mathrm{ddo} \mathcal{D}_2
  &=
  (\join{}{\mathcal{D}_1}{\mathcal{D}_2}) \nltimes
  ((\join{}{\mathcal{D}_1}{\mathcal{D}_4})
    \nltimes \mathcal{D}_3).
    \label{def:DarwenOriginal}
\end{align}
Note that the relation scheme of result of Darwen's Divide is $R_1 \cup R_2$
since $R_1$ and $R_2$ are arbitrary and might have some attributes in common.

In the proof of the set notation of Darwen's Divide we utilize the following
lemma.

\begin{lemma}\label{lem:SemidiffChar}
  Consider relations $\mathcal{D}_1$ on $RS$ and $\mathcal{D}_2$ on $ST$ such that
  $R, S, T$ are pairwise disjoint ($R \cap S = R \cap T = S \cap T = \emptyset$).
  For every tuple $r \in \Tupl{R}$ and $s \in \Tupl{S}$ we have
  $rs \in \mathcal{D}_{1} \nltimes \mathcal{D}_{2}$ if{}f
  \begin{align}
    rs \in \mathcal{D}_{1}
    \blwedge
    \blneg (\exists t \in \Tupl{T}) st \in \mathcal{D}_{2}
    \label{eqn:SemidiffChar}
  \end{align}
  or equivalently
  \begin{align}
    rs \in \mathcal{D}_{1}
    \blwedge
    \blneg (\exists s't \in \Tupl{ST})
    \left(
    	(rs)(S) = (s't)(S) \blwedge s't \in \mathcal{D}_{2}
    \right),
    \label{eqn:SemidiffCharEquiv}
  \end{align}
  where $s' \in \Tupl{S}$ and $t \in \Tupl{T}.$
\begin{proof}
The first part follows directly from the definition of semidifference:
\begin{align*}
rs \in \mathcal{D}_{1} \nltimes \mathcal{D}_{2} \iff \enspace
& rs \in \mathcal{D}_{1}
  \setminus \pi_{RS}
  ( \mathcal{D}_{1} \bowtie \mathcal{D}_{2} ) 
\\ \iff \enspace &
rs \in \mathcal{D}_{1}
\blwedge
\blneg (rs \in \mathcal{D}_{1} \blwedge s \in \pi_{S}(\mathcal{D}_{2}))
\\ \iff \enspace &
\underbrace{(rs \in \mathcal{D}_{1}
  \blwedge
  \blneg rs \in \mathcal{D}_{1})}_{\text{always false}
}
\blvee
(rs \in \mathcal{D}_{1} \blwedge \blneg s \in \pi_{S}(\mathcal{D}_{2}))
\\ \iff \enspace &
rs \in \mathcal{D}_{1}
\blwedge
\blneg (\exists t \in \Tupl{T}) st \in \mathcal{D}_{2}.
\end{align*}
The rest follows from the fact that $R, S, T$ are pairwise disjoint and
$(rs)(S) = (s't)(S)$ is equivalent to $s = s'$.
\end{proof}
\end{lemma}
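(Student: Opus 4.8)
The plan is to reduce the membership statement entirely to propositional reasoning over set-membership atoms, unfolding the abbreviation $\nltimes$ step by step until the stated characterizations emerge. First I would use the definition of the semidifference, $\mathcal{D}_1 \nltimes \mathcal{D}_2 = \mathcal{D}_1 \setminus (\mathcal{D}_1 \ltimes \mathcal{D}_2)$, to rewrite $rs \in \mathcal{D}_1 \nltimes \mathcal{D}_2$ as $rs \in \mathcal{D}_1 \blwedge \blneg(rs \in \mathcal{D}_1 \ltimes \mathcal{D}_2)$, working throughout in the non-ranked (classic) setting in which set difference corresponds to conjunction with a negation.

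Next I would compute the semijoin membership. Since $R,S,T$ are pairwise disjoint, the two input schemes $RS$ and $ST$ share exactly $S$, so the equivalent form of the semijoin recalled in the preliminaries gives $\mathcal{D}_1 \ltimes \mathcal{D}_2 = \join{}{\mathcal{D}_1}{\project{S}{\mathcal{D}_2}}$, whence $rs \in \mathcal{D}_1 \ltimes \mathcal{D}_2$ iff $rs \in \mathcal{D}_1 \blwedge s \in \project{S}{\mathcal{D}_2}$. Substituting this and distributing the outer atom $rs \in \mathcal{D}_1$ across the negated conjunction produces a disjunction one of whose disjuncts is $rs \in \mathcal{D}_1 \blwedge \blneg(rs \in \mathcal{D}_1)$, a contradiction that drops out, leaving $rs \in \mathcal{D}_1 \blwedge \blneg(s \in \project{S}{\mathcal{D}_2})$. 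Finally I would read off the projection semantics: $s \in \project{S}{\mathcal{D}_2}$ holds exactly when some extension of $s$ lies in $\mathcal{D}_2$, i.e.\ $(\exists t \in \Tupl{T})\, st \in \mathcal{D}_2$, and negating this yields precisely~\eqref{eqn:SemidiffChar}.

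For the equivalence with~\eqref{eqn:SemidiffCharEquiv}, the only thing to verify is that quantifying over a combined tuple $s't \in \Tupl{ST}$ subject to the constraint $(rs)(S) = (s't)(S)$ is the same as quantifying over $t$ alone: because $(rs)(S) = s$ and $(s't)(S) = s'$, the constraint reduces to $s = s'$, which pins down $s'$ and collapses the $\Tupl{ST}$-quantifier to the $\Tupl{T}$-quantifier. I do not expect a genuine obstacle here, as the whole argument is a routine translation between the set-algebraic and the logical notation; the only points demanding care are the bookkeeping of which attributes the two schemes share—so that the alternative form of the semijoin applies—and the observation that the self-contradictory disjunct vanishes, which is exactly what makes $\mathcal{D}_1 \nltimes \mathcal{D}_2$ depend on $\mathcal{D}_2$ only through its $S$-projection.
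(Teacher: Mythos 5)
Your proposal is correct and follows essentially the same route as the paper's own proof: unfold the semidifference into a conjunction with a negated semijoin membership, distribute so the self-contradictory disjunct $rs \in \mathcal{D}_1 \blwedge \blneg(rs \in \mathcal{D}_1)$ drops out, and read the $S$-projection membership as the existential over $\Tupl{T}$; the second form then reduces to the first because $(rs)(S)=(s't)(S)$ forces $s=s'$. The only cosmetic difference is that you invoke the semijoin in the form $\join{}{\mathcal{D}_1}{\project{S}{\mathcal{D}_2}}$ while the paper writes it as $\pi_{RS}(\mathcal{D}_1 \bowtie \mathcal{D}_2)$, which the preliminaries already identify as equivalent.
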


To simplify the notation, for two tuples $r_1 \in \Tupl{R_1}$ and
$r_2 \in \Tupl{R_2}$ we denote by $r_1 \between r_2$ the fact that $r_1$ and
$r_2$ are joinable ($r_1(R_1 \cap R_2) = r_2(R_1 \cap R_2)$).

Let us note that the Lemma \ref{lem:SemidiffChar} can be applied to relations on
 arbitrary schemes. For relations $\mathcal{D}_1$ on $R_1$ and
$\mathcal{D}_2$ on $R_2$ it suffices to put
$R = R_1 \setminus R_2, S = R_1 \cap R_2$ and $T = R_2 \setminus R_1$.
Obviously, relation schemes $R, S, T$ defined in this manner are pairwise
disjoint and it holds that $R_1 = R \cup S$ and $R_2 = S \cup T.$ Now for
$r_1 \in \Tupl{R_1}$ using \eqref{eqn:SemidiffCharEquiv} we have
$r_1 \in \mathcal{D}_{1} \nltimes \mathcal{D}_{2}$ if{}f
\begin{align}
  r_1 \in \mathcal{D}_{1}
  \blwedge
  \blneg (\exists r_2 \in \Tupl{R_2}) \left(
  r_1 \between r_2 \blwedge r_2 \in \mathcal{D}_{2} \right)
  \label{eqn:SemidiffCharArbitrary}
\end{align}

To put \eqref{eqn:SemidiffCharArbitrary} in words, tuple $r_1$ belongs to the
result of semidifference of $\mathcal{D}_1$ and $\mathcal{D}_2$ (in this order)
if{}f $r_1$ belongs to $\mathcal{D}_1$ and there is no tuple $r_2$ from
$\mathcal{D}_2$ that is joinable with $r_1$.

\begin{theorem}
Consider relations
$\mathcal{D}_1$ on $R_1$,
$\mathcal{D}_2$ on $R_2$,
$\mathcal{D}_3$ on $R_3$,
and $\mathcal{D}_4$ on $R_4$. The definition \eqref{def:DarwenOriginal} can be
equivalently expressed in the set notation as follows:
\begin{align}
  \mathcal{D}_1 &\div^{\mathcal{D}_3,\mathcal{D}_4}_\mathrm{ddo} \mathcal{D}_2
  \label{eqn:DarwenOriginal_set}
  = \\ \notag
  &
  \bigl\{r_1 r_2 \in \mathcal{U} \,|\,
  \text{for all } r_4 \in \mathcal{D}_4
  \text{: if }
  r_1 r_2 \between r_4
  \text{, then there is } r_3 \in \mathcal{D}_3
  \text{: }
  r_1 r_4 \between r_3
  \bigr\},
\end{align}
where $\mathcal{U} = \join{}{\mathcal{D}_1}{\mathcal{D}_2}$.

\begin{proof}
First, the fact that
$\mathcal{D}_1 \div^{\mathcal{D}_3,\mathcal{D}_4}_\mathrm{ddo} \mathcal{D}_2
\subseteq
\join{}{\mathcal{D}_1}{\mathcal{D}_2}$ follows directly from the definition of
semidifference.

For brevity, in the following proof we will denote the join of $\mathcal{D}_1$
and $\mathcal{D}_2$ by $\mathcal{U} = \join{}{\mathcal{D}_1}{\mathcal{D}_2}$.
Now, let $r_1 \in \Tupl{R_1}$ and $r_2 \in \Tupl{R_2}$ be joinable tuples.
Using \eqref{eqn:SemidiffCharArbitrary} we have
\begin{align*}
&
r_1 r_2 \in
  \mathcal{D}_1 \div^{\mathcal{D}_3,\mathcal{D}_4}_\mathrm{ddo} \mathcal{D}_2
\\ \iff \enspace &
r_1 r_2 \in
  (\mathcal{D}_{1} \bowtie \mathcal{D}_{2})
  \nltimes
  ((\mathcal{D}_{1} \bowtie \mathcal{D}_{4}) \nltimes \mathcal{D}_{3})
\\ \iff \enspace &
r_1 r_2 \in \mathcal{U}
\blwedge
\blneg (\exists r' \in \Tupl{R_1 \cup R_4}) 
	\left(
  r_1 r_2 \between r' \blwedge r' \in
    ( \mathcal{D}_{1} \bowtie \mathcal{D}_{4}) \nltimes \mathcal{D}_{3}
  	\right)
\end{align*}

The tuple $r' \in \Tupl{R_1 \cup R_4}$ can be seen as a join of tuples
$r' = r_1' r_4$, where $r_1' \in \Tupl{R_1}$ and $r_4 \in \Tupl{R_4}$ such that
$r_1' \between r_4$. We can replace the $(\exists r' \in \Tupl{R_1 \cup R_4})$
with $(\exists r_1' \in \Tupl{R_1}) (\exists r_4 \in \Tupl{R_4})$ and additional
constraint that ensures joinability of $r_1'$ and $r_4$.

It is easy to see that $r_1 r_2$ is joinable with $r'$ if and only if $r_1 r_2$
is joinable with all ``components'' of $r'$ (here with both $r_1'$ and $r_4$).
Symbolically, we have $r_1 r_2 \between r'$ if{}f
$r_1 r_2 \between r_1' \blwedge r_1 r_2 \between r_4$. Since both
$r_1, r_1' \in \Tupl{R_1}$, the first condition $r_1 r_2 \between r_1'$ is
equivalent to $r_1 = r_1'$. Furthermore, second condition $r_1 r_2 \between r_4$
 implies $r_1 \between r_4$.

Continuing the proof and applying \eqref{eqn:SemidiffCharArbitrary} to the
second semidifference we have
\begin{align*}
& r_1 r_2 \in
  \mathcal{D}_1 \div^{\mathcal{D}_3,\mathcal{D}_4}_\mathrm{ddo} \mathcal{D}_2
\\ \iff \enspace &
r_1 r_2 \in \mathcal{U}
\blwedge
\blneg (\exists r' \in \Tupl{R_1 \cup R_4}) \left( 
  r_1 r_2 \between r' \blwedge r' \in
    ( \mathcal{D}_{1} \bowtie \mathcal{D}_{4}) \nltimes \mathcal{D}_{3}
  \right)
\\ \iff \enspace &
r_1 r_2 \in \mathcal{U}
\blwedge
\blneg (\exists r_1' \in \Tupl{R_1}) (\exists r_4 \in \Tupl{R_4})
\\ &
\left(
  r_1' \between r_4 \blwedge r_1' = r_1 \blwedge r_1 r_2 \between r_4
  \blwedge
  r_1' r_4 \in
    ( \mathcal{D}_{1} \bowtie \mathcal{D}_{4}) \nltimes \mathcal{D}_{3}
\right)
\\ \iff \enspace &
r_1 r_2 \in \mathcal{U}
\blwedge
\blneg (\exists r_4 \in \Tupl{R_4})
\left(
  r_1 r_2 \between r_4 \blwedge r_1 r_4 \in
    ( \mathcal{D}_{1} \bowtie \mathcal{D}_{4}) \nltimes \mathcal{D}_{3}
\right)
\\ \iff \enspace &
r_1 r_2 \in \mathcal{U}
\blwedge
\blneg (\exists r_4 \in \Tupl{R_4})
\\ &
\left(
  r_1 r_2 \between r_4 \blwedge r_1 r_4 \in
    \mathcal{D}_{1} \bowtie \mathcal{D}_{4}
    \blwedge
    \blneg (\exists r_3 \in \Tupl{R_3})
    \left(
    r_1 r_4 \between r_3 \blwedge r_3 \in \mathcal{D}_{3}
    \right)
\right)
\end{align*}

Now, $r_1 r_4 \in \mathcal{D}_{1} \bowtie \mathcal{D}_{4}$ is equivalent to
$r_1 \in \mathcal{D}_{1} \blwedge r_4 \in \mathcal{D}_{4}$ provided that $r_1$ is
joinable with $r_4$, but this is ensured by $r_1 r_2 \between r_4$. Furthermore,
 $r_1 \in \mathcal{D}_{1}$ does not depend on the existence of $r_4$ and can be
 taken outside the scope of the quantifier. We get
\begin{align*}
& r_1 r_2 \in
  \mathcal{D}_1 \div^{\mathcal{D}_3,\mathcal{D}_4}_\mathrm{ddo} \mathcal{D}_2
\\ \iff \enspace &
r_1 r_2 \in \mathcal{U}
\blwedge
\blneg (\exists r_4 \in \Tupl{R_4})
\\ &
\left(
  r_1 r_2 \between r_4 \blwedge r_1 r_4 \in
    \mathcal{D}_{1} \bowtie \mathcal{D}_{4}
    \blwedge
    \blneg (\exists r_3 \in \Tupl{R_3})
    \left(
      r_1 r_4 \between r_3 \blwedge r_3 \in \mathcal{D}_{3}
    \right)
\right)
\\ \iff \enspace &
r_1 r_2 \in \mathcal{U}
\blwedge
\blneg (r_1 \in \mathcal{D}_{1} \blwedge (\exists r_4 \in \Tupl{R_4})
\\ &
\left(
  r_1 r_2 \between r_4 \blwedge r_4 \in \mathcal{D}_{4}
  \blwedge
  \blneg (\exists r_3 \in \Tupl{R_3})
  \left( r_1 r_4 \between r_3 \blwedge r_3 \in \mathcal{D}_{3} \right)
\right)
\\ \iff \enspace &
\overbrace{(r_1 r_2 \in \mathcal{U}
  \blwedge
  \blneg r_1 \in \mathcal{D}_{1})}^{\text{always false}
}
\blvee
(r_1 r_2 \in \mathcal{U}
\blwedge
\blneg (\exists r_4 \in \Tupl{R_4})
\\ &
\left(
  r_1 r_2 \between r_4 \blwedge r_4 \in \mathcal{D}_{4}
    \blwedge
    \blneg (\exists r_3 \in \Tupl{R_3})
    \left(
      r_1 r_4 \between r_3 \blwedge r_3 \in \mathcal{D}_{3})
    \right)
\right)
\\ \iff \enspace &
r_1 r_2 \in \mathcal{U}
\blwedge
(\forall r_4 \in \Tupl{R_4})
\\ &
\left(
  \blneg(r_1 r_2 \between r_4 \blwedge r_4 \in \mathcal{D}_{4})
  \blvee
  \blneg\blneg (\exists r_3 \in \Tupl{R_3})
  \left(
    r_1 r_4 \between r_3 \blwedge r_3 \in \mathcal{D}_{3}
  \right)
\right)
\\ \iff \enspace &
r_1 r_2 \in \mathcal{U}
\blwedge
(\forall r_4 \in \Tupl{R_4})
\\ &
\left(
  (r_1 r_2 \between r_4 \blwedge r_4 \in \mathcal{D}_{4})
  \blRightarrow
  (\exists r_3 \in \Tupl{R_3})
  \left(
    r_1 r_4 \between r_3 \blwedge r_3 \in \mathcal{D}_{3}
  \right)
\right),
\end{align*}
which concludes the proof.
\end{proof}
\end{theorem}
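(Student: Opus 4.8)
The plan is to take the semidifference definition \eqref{def:DarwenOriginal} and peel off the two nested occurrences of $\nltimes$ one at a time, using the arbitrary-scheme characterization \eqref{eqn:SemidiffCharArbitrary} as the workhorse at each stage; since the relations here are non-ranked, all the reasoning lives in classical two-valued logic. First I would dispatch the containment $\mathcal{D}_1 \div^{\mathcal{D}_3,\mathcal{D}_4}_\mathrm{ddo} \mathcal{D}_2 \subseteq \join{}{\mathcal{D}_1}{\mathcal{D}_2}$, which is immediate from the definition of $\nltimes$ (a semidifference is always a subrelation of its first argument). This lets me restrict attention to joinable $r_1 \in \Tupl{R_1}$ and $r_2 \in \Tupl{R_2}$ and compute membership of $r_1 r_2$ in the result.

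The first application of \eqref{eqn:SemidiffCharArbitrary} treats the inner expression $(\join{}{\mathcal{D}_1}{\mathcal{D}_4}) \nltimes \mathcal{D}_3$ as a single relation on $R_1 \cup R_4$, giving that $r_1 r_2$ lies in the outer semidifference iff $r_1 r_2 \in \mathcal{U}$ and there is no $r' \in \Tupl{R_1 \cup R_4}$ joinable with $r_1 r_2$ that belongs to that inner relation. I would then write $r'$ as a join $r_1' r_4$ with $r_1' \in \Tupl{R_1}$ and $r_4 \in \Tupl{R_4}$, splitting $(\exists r')$ into $(\exists r_1')(\exists r_4)$ together with the constraint $r_1' \between r_4$. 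The key simplification is that joinability of $r_1 r_2$ with $r'$ decomposes componentwise into $r_1 r_2 \between r_1'$ and $r_1 r_2 \between r_4$; since $r_1, r_1' \in \Tupl{R_1}$ the former forces $r_1' = r_1$, collapsing the $r_1'$-quantifier and allowing $r_1$ to be substituted throughout.

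Next I would apply \eqref{eqn:SemidiffCharArbitrary} a second time to unfold the inner $\nltimes \mathcal{D}_3$, introducing a negated existential over $r_3 \in \Tupl{R_3}$ with joinability $r_1 r_4 \between r_3$. Here $r_1 r_4 \in \join{}{\mathcal{D}_1}{\mathcal{D}_4}$ rewrites as $r_1 \in \mathcal{D}_1 \blwedge r_4 \in \mathcal{D}_4$ (the needed $r_1 \between r_4$ being supplied by $r_1 r_2 \between r_4$), so $r_1 \in \mathcal{D}_1$ is independent of $r_4$ and can be hoisted outside the $(\exists r_4)$ scope. The remainder is purely propositional: distributing the outer negation yields a disjunction whose first disjunct $r_1 r_2 \in \mathcal{U} \blwedge \blneg\, r_1 \in \mathcal{D}_1$ is always false (membership in $\mathcal{U} = \join{}{\mathcal{D}_1}{\mathcal{D}_2}$ forces $r_1 \in \mathcal{D}_1$); on the surviving disjunct I convert $\blneg(\exists r_4)(\cdots)$ into $(\forall r_4)(\cdots)$, cancel the inner $\blneg\blneg(\exists r_3)$ by double-negation elimination, and read $\blneg A \blvee B$ as $A \blRightarrow B$, producing exactly \eqref{eqn:DarwenOriginal_set}.

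I expect the main obstacle to be the bookkeeping of the joinability predicates $\between$ as the quantifier over $\Tupl{R_1 \cup R_4}$ is decomposed: one must check that componentwise joinability plus $r_1' = r_1$ faithfully replaces the single condition $r_1 r_2 \between r'$, and that no joinability constraint is silently lost when $r_1 \in \mathcal{D}_1$ is pulled out of the $(\exists r_4)$ scope. Everything beyond that is routine classical logic.
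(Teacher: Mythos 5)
Your proposal follows the paper's proof essentially step for step: the same initial containment observation, the same two applications of \eqref{eqn:SemidiffCharArbitrary} to peel off the outer and inner semidifferences, the same decomposition of $r'$ into $r_1' r_4$ with the collapse $r_1' = r_1$, the same hoisting of $r_1 \in \mathcal{D}_1$ out of the existential scope, and the same final propositional rewriting into a universally quantified implication. It is correct and takes essentially the same approach as the paper.
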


Now, based on~\eqref{eqn:DarwenOriginal_set}, we may introduce a graded variant
$\div_\mathrm{gddo}$ of the Darwen's Divide as follows
\begin{align}
  \bigl(
  \mathcal{D}_1 &
  \div^{\mathcal{D}_3,\mathcal{D}_4}_\mathrm{gddo}
  \mathcal{D}_2
  \bigr)(r_1 r_2) = \notag \\ &=
  \mathcal{D}_1(r_1) \otimes \mathcal{D}_2(r_2) \otimes
  \bigwedge_{\substack{r_4 \in \Tupl{R_4} \\ r_1 r_2 \between r_4}}
  \!\!\!\!\!\!\!\!\!
  \Bigl( \mathcal{D}_{4}(r_4)
  \rightarrow
  \!\!\!
  \bigvee_{\substack{r_3 \in \Tupl{R_3} \\r_1 r_4 \between r_3}}
  \!\!\!\!\!\!\!\!\!
  \mathcal{D}_{3}(r_3) \Bigr) \notag \\
  &=
  (\join{}{\mathcal{D}_1}{\mathcal{D}_2})(r_1 r_2)
  \otimes
  \bigwedge_{\substack{r_4 \in \Tupl{R_4} \\ r_1 r_2 \between r_4}}
  \!\!\!\!\!\!\!\!\!
  \Bigl( \mathcal{D}_{4}(r_4)
  \rightarrow
  \!\!\!
  \bigvee_{\substack{r_3 \in \Tupl{R_3} \\r_1 r_4 \between r_3}}
  \!\!\!\!\!\!\!\!\!
  \mathcal{D}_{3}(r_3) \Bigr)
  \label{def:gradedDarwen}
\end{align}

The condition of joinability is not necessary and can be avoided. We can put
$R_{4\setminus 12} = R_4 \setminus (R_1 \cup R_2)$,
$R_{4\cap 12} = R_4 \cap (R_1 \cup R_2)$,
$R_{3\setminus 14} = R_3 \setminus (R_1 \cup R_4)$ and
$R_{3\cap 14} = R_3 \cap (R_1 \cup R_4)$. Obviously, it holds that
$R_{4\setminus 12} \cap R_{4\cap 12} = \emptyset$ and
$R_{4\setminus 12} \cup R_{4\cap 12} = R_4$. The same holds for
$R_{3\setminus 14}$ and $R_{3\cap 14}$.
Now, denote by $\TupCommon{1}{2}{4} = (r_1 r_2) (R_{4\cap 12})$ the projection of
 tuple $r_1 r_2$ onto $R_{4\cap 12}$ (i.e. onto common attributes of $R_4$ and
 $R_1 \cup R_2$. Considering $r_4' \in \Tupl{R_{4\setminus 12}}$ we get
$r_4 = \TupCommon{1}{2}{4} r_4'$.
Observe, that tuples $\TupCommon{1}{2}{4}$ and $r_4'$ are always joinable since
$R_{4\setminus 12} \cap R_{4\cap 12} = \emptyset$. We have expressed the tuple
$r_4$ without any need for joinability condition and we can remove the condition
 from the infimum operation.

We can now proceed to the condition in supremum. Note, that
\[
r_1 r_4
= r_1 \TupCommon{1}{2}{4} r_4'
= r_1 (r_1 r_2) (R_{4\cap 12}) r_4'
= r_1 (r_2) (R_{4\cap 2}) r_4'
= r_1 \TupCommon{}{2}{4} r_4'
\]

Again, by $\TupCommon{1}{4}{3} = (r_1 r_4) (R_{3\cap 14})
= (r_1 \TupCommon{}{2}{4} r_4')(R_{3\cap 14}) $ we denote the projection of the
tuple in question onto $R_{3\cap 14}$. For $r_3' \in \Tupl{R_{3\setminus 14}}$
we get $r_3 = \TupCommon{1}{4}{3} r_3'$.
Using similar argument, $\TupCommon{1}{4}{3}$ and $r_3'$ are always joinable.

Putting both observations together we finally get
\begin{align}
  \bigl(
  \mathcal{D}_1 &
  \div^{\mathcal{D}_3,\mathcal{D}_4}_\mathrm{gddo}
  \mathcal{D}_2
  \bigr)(r_1 r_2) = \notag \\ &=
  (\join{}{\mathcal{D}_1}{\mathcal{D}_2})(r_1 r_2) \otimes
  \bigwedge_{\substack{r_4 \in \Tupl{R_4} \\ r_1 r_2 \between r_4}}
  \!\!\!\!\!\!\!\!\!
  \Bigl( \mathcal{D}_{4}(r_4)
  \rightarrow
  \!\!\!
  \bigvee_{\substack{r_3 \in \Tupl{R_3} \\r_1 r_4 \between r_3}}
  \!\!\!\!\!\!\!\!\!
  \mathcal{D}_{3}(r_3) \Bigr) \notag \\
  &=
  (\join{}{\mathcal{D}_1}{\mathcal{D}_2})(r_1 r_2) \otimes
  \bigwedge_{r_4' \in \Tupl{R_{4\setminus 12}}}
  \!\!\!\!\!\!\!\!\!
  \Bigl( \mathcal{D}_{4}(\TupCommon{1}{2}{4} r_4')
  \rightarrow
  \!\!\!
  \bigvee_{r_3' \in \Tupl{R_{3\setminus 14}}}
  \!\!\!\!\!\!\!\!\!
  \mathcal{D}_{3}(\TupCommon{1}{4}{3} r_3') \Bigr)
  \label{def:gradedDarwenNoCond}
  \\ &=
  (\join{}{\mathcal{D}_1}{\mathcal{D}_2})(r_1 r_2) \otimes
  \bigwedge_{r_4' \in \Tupl{R_{4\setminus 12}}}
  \!\!\!\!\!\!\!\!\!
  \Bigl( \mathcal{D}_{4}(\TupCommon{1}{2}{4} r_4')
  \rightarrow
  \pi_{R_{3\cap 14}} (\mathcal{D}_{3}) ( \TupCommon{1}{4}{3} )
  \Bigr)
  \label{def:gradedDarwenNoCondAlt}
\end{align}

Graded Date's Great and Small Divide can be easily expressed by the graded
version of Darwen's Divide in the following way.

\begin{theorem}
  For relations on schemes that conform to requirements for Great Divide,
  precisely for relations $\mathcal{D}_1$ on $R$, $\mathcal{D}_2$ on $T$,
  $\mathcal{D}_3$ on $RS$, and $\mathcal{D}_4$ on $ST$, we have
  \[ \mathcal{D}_1 \div^{\mathcal{D}_3,\mathcal{D}_4}_\mathrm{ggdo}
  \mathcal{D}_2 = \mathcal{D}_1 \div^{\mathcal{D}_3,\mathcal{D}_4}_\mathrm{gddo}
   \mathcal{D}_2. \]
\end{theorem}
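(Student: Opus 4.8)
The plan is to specialize the already-established condition-free form~\eqref{def:gradedDarwenNoCondAlt} of the graded Darwen's Divide to the schemes required by the Great Divide and to check that it collapses term by term into~\eqref{def:gGreatOriginal}. Concretely, I would set $R_1 = R$, $R_2 = T$, $R_3 = RS$, and $R_4 = ST$, and then evaluate the four derived schemes appearing in~\eqref{def:gradedDarwenNoCondAlt}. Using that $R$, $S$, $T$ are pairwise disjoint (which is exactly what the Great Divide requirements guarantee, including $R \cap T = \emptyset$ so that the cross join $\join{}{\mathcal{D}_1}{\mathcal{D}_2}$ is well defined), a direct computation gives $R_{4\setminus 12} = S$, $R_{4\cap 12} = T$, $R_{3\setminus 14} = \emptyset$, and $R_{3\cap 14} = RS$.

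First I would rewrite the infimum. Writing $r = r_1$ and $t = r_2$, the index set $\Tupl{R_{4\setminus 12}}$ becomes $\Tupl{S}$, so $r_4'$ ranges exactly over the tuples $s \in \Tupl{S}$. Moreover $\TupCommon{1}{2}{4} = (r_1 r_2)(R_{4\cap 12}) = (r_1 r_2)(T) = t$, whence $\mathcal{D}_4(\TupCommon{1}{2}{4} r_4') = \mathcal{D}_4(ts) = \mathcal{D}_4(st)$. Thus the outer $\bigwedge$ of~\eqref{def:gradedDarwenNoCondAlt} matches the $\bigwedge_{s \in \Tupl{S}}$ of~\eqref{def:gGreatOriginal} on the antecedent side.

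The key step is the collapse of the inner supremum. Since $R_{3\setminus 14} = \emptyset$, the index set $\Tupl{R_{3\setminus 14}} = \{\emptyset\}$ is a singleton, so the supremum in~\eqref{def:gradedDarwen} and~\eqref{def:gradedDarwenNoCond} reduces to the single value $\mathcal{D}_3(\TupCommon{1}{4}{3})$; equivalently, because $R_{3\cap 14} = RS$ is the entire scheme of $\mathcal{D}_3$, the projection $\pi_{R_{3\cap 14}}(\mathcal{D}_3)$ in~\eqref{def:gradedDarwenNoCondAlt} is just $\mathcal{D}_3$ itself. It then remains to identify the argument: with $r_4 = \TupCommon{1}{2}{4} r_4' = ts = st$ one has $r_1 r_4 = rst$, and hence $\TupCommon{1}{4}{3} = (r_1 r_4)(RS) = rs$. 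Therefore the consequent side evaluates to $\mathcal{D}_3(rs)$, exactly as in~\eqref{def:gGreatOriginal}.

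Putting these identifications together, the residuum $\mathcal{D}_4(\TupCommon{1}{2}{4} r_4') \rightarrow \pi_{R_{3\cap 14}}(\mathcal{D}_3)(\TupCommon{1}{4}{3})$ becomes $\mathcal{D}_4(st) \rightarrow \mathcal{D}_3(rs)$, the infimum ranges over $s \in \Tupl{S}$, and the leading factor $(\join{}{\mathcal{D}_1}{\mathcal{D}_2})(r_1 r_2)$ is literally the leading factor of~\eqref{def:gGreatOriginal}; so the two right-hand sides agree for every $r_1 r_2 = rt$, which proves the claimed equality. The only genuine subtlety, and the step I would treat most carefully, is the vanishing of $R_{3\setminus 14}$, i.e. that $R_3 = RS \subseteq R_1 \cup R_4 = RST$. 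This is precisely the scheme restriction imposed by the Great Divide (and absent from Darwen's fully general setting) that turns the honest existential supremum over mediator tuples into a deterministic lookup of a single value; everything else is routine bookkeeping of the scheme intersections.
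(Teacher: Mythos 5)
Your proof is correct and follows essentially the same route as the paper: the paper likewise instantiates the condition-free form \eqref{def:gradedDarwenNoCondAlt} with $R_1=R$, $R_2=T$, $R_3=RS$, $R_4=ST$, computes $R_{4\setminus 12}=S$, $R_{4\cap 12}=T$, $R_{3\cap 14}=RS$, and reads off $\mathcal{D}_4(st)\rightarrow\mathcal{D}_3(rs)$ under the infimum over $s\in\Tupl{S}$. Your additional remark that $R_{3\setminus 14}=\emptyset$ collapses the inner supremum to a single lookup is a detail the paper leaves implicit (it simply writes $\pi_{RS}(\mathcal{D}_3)=\mathcal{D}_3$), but it is the same argument.
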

\begin{proof}
  For relations $\mathcal{D}_1$ on $R$, $\mathcal{D}_2$ on $T$, $\mathcal{D}_3$
  on $RS$ and $\mathcal{D}_4$ on $ST$, we have
  \begin{align*}
  \bigl(
  \mathcal{D}_1 &
  \div^{\mathcal{D}_3,\mathcal{D}_4}_\mathrm{gddo}
  \mathcal{D}_2
  \bigr)(r t) =
  \\ &=
  (\join{}{\mathcal{D}_1}{\mathcal{D}_2})(r_1 r_2) \otimes
  \bigwedge_{r_4' \in \Tupl{ST \setminus (R \cup T)}}
  \!\!\!\!\!\!\!\!\!
  \Bigl( \mathcal{D}_{4}(\TupCommon{1}{2}{4} r_4')
  \rightarrow
  \pi_{RS \cap (R \cup ST) } (\mathcal{D}_{3}) ( \TupCommon{1}{4}{3} )
  \Bigr)
  \\ &=
  (\join{}{\mathcal{D}_1}{\mathcal{D}_2})(r t)
  \otimes
  \bigwedge_{\substack{s \in \Tupl{S}}}
  \!\!\!\!\!\!
  \Bigl( \mathcal{D}_{4}(st) \rightarrow \mathcal{D}_{3}(rs) \Bigr)
  \\ &= \bigl(
  \mathcal{D}_1
  \div^{\mathcal{D}_3,\mathcal{D}_4}_\mathrm{ggdo}
  \mathcal{D}_2
  \bigr)(r t).
\end{align*}
\end{proof}

\begin{corollary}
For relations on schemes that conform to requirements for Small Divide,
precisely for relations $\mathcal{D}_1$ on $R$, $\mathcal{D}_2$ on $S$ and
$\mathcal{D}_3$ on $RS$, we have $\mathcal{D}_1
\div_\mathrm{gsdo}^{\mathcal{D}_3} \mathcal{D}_2 =
\mathcal{D}_1 \div_\mathrm{gddo}^{\mathcal{D}_3,\mathcal{D}_2} \Dee{1}$.
\qed
\end{corollary}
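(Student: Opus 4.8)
The plan is to obtain this identity by composing two facts already in hand, so that no fresh unfolding of $\div_\mathrm{gddo}$ is really needed. The guiding observation is that the divisor here is $\Dee{1}$, an RDT on the \emph{empty} relation scheme; instantiating the divisor scheme as $T = \emptyset$ collapses the Great/Darwen scheme configuration --- dividend on $R$, divisor on $T$, first mediator $\mathcal{D}_3$ on $RS$, second mediator on $ST$ --- precisely into the Small Divide configuration, since $ST = S$ when $T = \emptyset$, leaving $\mathcal{D}_2$ in the second-mediator role on $S$.

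First I would invoke the earlier corollary rewriting the graded original Small Divide as a graded original Great Divide with divisor $\Dee{1}$, namely $\mathcal{D}_1 \div_\mathrm{gsdo}^{\mathcal{D}_3} \mathcal{D}_2 = \mathcal{D}_1 \div_\mathrm{ggdo}^{\mathcal{D}_3,\mathcal{D}_2} \Dee{1}$. Then I would apply the immediately preceding theorem, $\mathcal{D}_1 \div_\mathrm{ggdo}^{\mathcal{D}_3,\mathcal{D}_4} \mathcal{D}_2 = \mathcal{D}_1 \div_\mathrm{gddo}^{\mathcal{D}_3,\mathcal{D}_4} \mathcal{D}_2$, with the instantiation divisor $=\Dee{1}$ on $T=\emptyset$ and second mediator $\mathcal{D}_4 := \mathcal{D}_2$ on $ST = S$, to get $\mathcal{D}_1 \div_\mathrm{ggdo}^{\mathcal{D}_3,\mathcal{D}_2} \Dee{1} = \mathcal{D}_1 \div_\mathrm{gddo}^{\mathcal{D}_3,\mathcal{D}_2} \Dee{1}$. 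Chaining the two equalities yields the claim.

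The only real (and minor) obstacle is the scheme bookkeeping: one must check that substituting $\Dee{1}$ for the divisor genuinely lands inside the hypotheses of the preceding theorem, i.e.\ that the empty divisor scheme forces $ST = S$ and keeps $\mathcal{D}_2$ as a relation on $S$. As a sanity check I would also verify the identity by a direct computation from \eqref{def:gradedDarwen}: taking the divisor $\Dee{1}$ on $R_2 = \emptyset$, the factor $\Dee{1}(\emptyset) = 1$ disappears and $\join{}{\mathcal{D}_1}{\Dee{1}}$ reduces to $\mathcal{D}_1$; the joinability side-conditions become vacuous because $R \cap S = \emptyset$; the inner supremum over $r_3$ on $RS$ joinable with $r_1 r_4$ collapses to the single term $\mathcal{D}_3(r_1 s)$; and the outer infimum over $r_4$ becomes an infimum over $s \in \Tupl{S}$. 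What remains is exactly $\mathcal{D}_1(r) \otimes \bigwedge_{s \in \Tupl{S}} (\mathcal{D}_2(s) \rightarrow \mathcal{D}_3(rs))$, which is the right-hand side of \eqref{def:gSmallOriginal} defining $\mathcal{D}_1 \div_\mathrm{gsdo}^{\mathcal{D}_3} \mathcal{D}_2$, confirming the equality on the nose.
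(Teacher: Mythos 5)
Your proposal is correct and matches the paper's intended argument: the corollary is stated with no written proof precisely because it follows by chaining the earlier corollary $\mathcal{D}_1 \div_\mathrm{gsdo}^{\mathcal{D}_3} \mathcal{D}_2 = \mathcal{D}_1 \div_\mathrm{ggdo}^{\mathcal{D}_3,\mathcal{D}_2} \Dee{1}$ with the preceding theorem $\div_\mathrm{ggdo} = \div_\mathrm{gddo}$ instantiated at $T=\emptyset$, $\mathcal{D}_4 = \mathcal{D}_2$ on $ST = S$, exactly as you do. Your scheme bookkeeping and the direct-computation sanity check from \eqref{def:gradedDarwen} are both sound.
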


\section{Pseudo Tuple Relational Calculus}\label{sec:ptc}
In this section, we present a query language we use in this paper for easier
reasoning about the relational algebra operations. The Pseudo Tuple Calculus
(shortly, PTC) is similar to the ordinary tuple calculus, however, 
it provides more convenient way to reason about relational algebra expressions 
in the presence of scores. In the next section we use the PTC to show mutual
relationships among the division operations.

\subsection{PTC-expressions and their evaluation}
Every PTC-expression $\mathcal{T}(\tv{r}_1, \ldots, \tv{r}_n)$ of Pseudo Tuple
Calculus is associated with a finite set of free tuple variables
$\tv{r}_1, \ldots, \tv{r}_n$ that appear in the PTC-expression.
For each tuple variable $\tv{r}_i$ we consider its relation scheme $R_i$. We
assume that tuple variables with the same name have the same relation scheme.
The relation scheme $R_\mathcal{T}$ of PTC-expression
$\mathcal{T}(\tv{r}_1, \ldots, \tv{r}_n)$ is given by the union of relation
schemes of the tuple variables $R_\mathcal{T} = \bigcup_{i = 1}^{n} R_i$.

Since we do not utilize any disjunctive operations in this paper we define here
only a fragment of the Pseudo Tuple Calculus without the
corresponding disjunctive expressions. For the same reason we omit the
treatment of restrictions as well.

\subsubsection{Syntax of PTC-expressions}
The PTC-expressions are defined inductively as follows.
\begin{enumerate}
\item if $E$ is a relational algebra expression (shortly,
RA-expression) on relation scheme $R$ and $\tv{r}_1, \ldots, \tv{r}_n$ are tuple
variables on $R_1, \ldots, R_n$ such that $R = \bigcup_{i = 1}^{n} R_i$, then
$E(\tv{r}_1, \ldots, \tv{r}_n)$ is an (atomic) PTC-expression on relation scheme
$R$.

In order to keep our notation simple, we abbreviate finite sets of tuple
variables as $\mathbf{r} = \{ \tv{r}_1, \ldots, \tv{r}_n \}$ and their
corresponding relation schemes as $R_\mathbf{r} = \bigcup_{i = 1}^{n} R_i$.
In the simplified notation, the (atomic) PTC-expression
$E(\tv{r}_1, \ldots, \tv{r}_n)$ becomes $E(\mathbf{r})$.

\item if $\mathcal{T}_1(\mathbf{r}_1)$ and $\mathcal{T}_2(\mathbf{r}_2)$ are
PTC-expressions on $R_{\mathbf{r}_1}$ and $R_{\mathbf{r}_2}$ respectively, then
$(\mathcal{T}_1(\mathbf{r}_1)
\circop
\mathcal{T}_2(\mathbf{r}_2))(\mathbf{r}_1 \cup \mathbf{r}_2)$
is PTC-expression on $R_{\mathbf{r}_1} \cup R_{\mathbf{r}_2}$, where $\circop$
is one of the following symbols $\otimes, \wedge, \rightarrow$. Note that
$\mathbf{r}_1 \cup \mathbf{r}_2$ is well-defined since we assume that tuple
variables with the same name have the same relation scheme.

To simplify notation, we do not have to explicitly mention the set
$\mathbf{r}_1 \cup \mathbf{r}_2$ since it can be easily deduced from the form
of the subexpressions. Thus, the above mentioned PTC-expression becomes
$\mathcal{T}_1(\mathbf{r}_1) \circop \mathcal{T}_2(\mathbf{r}_2)$. In more
complex expressions we utilize outer parentheses to avoid ambiguity in the usual
 way.

\item if $\mathcal{T}(\mathbf{r})$ is PTC-expression on $R_{\mathbf{r}}$ then
$(\nabla\mathcal{T}(\mathbf{r}))(\mathbf{r})$ and
$(\Delta\mathcal{T}(\mathbf{r}))(\mathbf{r})$ are PTC-expressions on
$R_{\mathbf{r}}$.
In simplified notation we have $\nabla\mathcal{T}(\mathbf{r})$ and
$\Delta\mathcal{T}(\mathbf{r})$.

\item if $\mathcal{T}(\mathbf{r}_1 \cup \mathbf{r}_2)$ is PTC-expression on
$R_{\mathbf{r}_1} \cup R_{\mathbf{r}_2}$ such
that $R_{\mathbf{r}_1} \cap R_{\mathbf{r}_2} = \emptyset$ then
$\left( \bigvee_{\!\mathbf{r}_1}
 \! \mathcal{T}(\mathbf{r}_1 \cup \mathbf{r}_2) \right) (\mathbf{r_2})$ and
$\left( \bigwedge_{\mathbf{r}_1}
 \! \mathcal{T}(\mathbf{r}_1 \cup \mathbf{r}_2) \right) (\mathbf{r_2})$ are
PTC-expressions on~$R_{\mathbf{r}_2}$.

For aesthetic reasons we will denote the set $\mathbf{r}_1 \cup \mathbf{r}_2$
by $\mathbf{r}_1, \mathbf{r}_2$. In the simplified notation we get
$\ptcsup$ and $\ptcinf$.
\end{enumerate}

\subsubsection{Semantics of PTC-expressions}
The evaluation of PTC-expressions is based on the notion of a database
instance~$\DIN$. Loosely speaking, a database instance assigns appropriate
relations to relation symbols from a database scheme---database instance can be
seen as a snapshot of all base relations that we have in some database.
Relations in database naturally change in time, however the database instance is
 fixed as it reflects the state of the database in a given point of time.
We tacitly assume that the database scheme is clear from the context.

Furthermore, we utilize the notion of extended active
domains. First, we define the active domain $adom(y, \mathcal{D})$ for the given
 attribute $y$ and relation $\mathcal{D}$ as a projection of $\mathcal{D}$ onto
${y}$ where all tuples with non-zero scores have their score set to one.
 We denote by $\mathcal{D}_i^y$ ($i \in I$) relations from the
given database instance $\DIN$ whose relation schemes contain the attribute
$y$ ($\{y\} \subseteq R_i$).
The extended active domain $eadom^{\DIN}(y)$ for the given database instance
$\DIN$ and attribute $y$ is defined as
\[ eadom^{\DIN}(y) = \textstyle\bigcup_{i \in I} adom(y, \mathcal{D}_i^y). \]
For the entire relation scheme $R = \{y_1, \ldots, y_n\}$ we define the extended
active domain as
\[ eadom^{\DIN}_R = eadom^{\DIN}(y_1) \bowtie \cdots \bowtie eadom^{\DIN}(y_n). \]
It is easy to see that the $eadom^{\DIN}_R$ contains every tuple on relation
scheme $R$ that can be built from all values of respective domains that are
available in the database instance in question. The extended active domain can
be seen as a finite universe of tuples for the given database instance and
relation scheme if we do not allow introduction of new domain values (by
singleton relations).

\begin{remark}
As an aside, let us mention that it is easy to modify the definition of extended
 active domain to incorporate new values introduced by singleton relations.
Since RA-expressions are finite, the number of new values is finite as well.
Before obtaining extended active domain $eadom^{\DIN}_R$ by joining all extended
 active domains $eadom(y_i)$ for attributes
$y_i \in R$ $(i \in \{ 1, \ldots, n \})$
it suffices to unify each $eadom(y_i)$ with a (finite) set of new values
whose domain coincides with the domain of attribute $y_i$.
\end{remark}

Now, we define the evaluation of PTC-expressions in database instances.
Suppose we have the database instance $\DIN$ and a PTC-expression
$\mathcal{T}(\mathbf{r})$, where
$\mathbf{r} = \{ \blmathbb{r}_1, \ldots, \blmathbb{r}_n \}$
such that each tuple variable $\blmathbb{r}_i$ is on relation scheme $R_i$.
By evaluating $\mathcal{T}(\mathbf{r})$ in $\DIN$ we obtain a relation
$\mathcal{T}^{\DIN}$ on relation scheme $R = \bigcup_{i = 1}^{n} R_i$.
For any tuple $r \not\in eadom^{\DIN}_R$ we put $\mathcal{T}^{\DIN}(r) = 0$.
In other words, the relation $\mathcal{T}^{\DIN}$ may contain only tuples from
$eadom^{\DIN}_R$. For each tuple $r \in eadom^{\DIN}_R$ we define its score in
the relation $\mathcal{T}^{\DIN}$ as follows.

Any tuple $r \in eadom^{\DIN}_R$ induces a valuation of the tuple variables
$\blmathbb{r}_1, \ldots, \blmathbb{r}_n$ from the PTC-expression. The valuation
assigns each variable $\blmathbb{r}_i$ the projection of tuple $r$ onto the
relation scheme $R_i$ of the variable in question, symbolically
$\| \blmathbb{r}_i \|_{r} = r(R_i)$.
We denote the join of valuated tuple variables
$\| \blmathbb{r}_1 \|_{r} \cdots \| \blmathbb{r}_n \|_{r}$ as
$ \| \mathbf{r} \|_{r}$. It is easy to see that $ \| \mathbf{r} \|_{r} = r$.
In general, for a set of tuple variables $\mathbf{r'}$ such that
$\mathbf{r'} \subseteq \mathbf{r}$ with relation scheme
$R_{\mathbf{r'}} \subseteq R$ it holds that
$\| \mathbf{r'} \|_{r} = r(R_\mathbf{r'}) $.
We define the score $\mathcal{T}^{\DIN}( \| \mathbf{r} \|_{r} )$ of tuple
$\| \mathbf{r} \|_{r}$ in the relation $\mathcal{T}^{\DIN}$ as follows.
According to the form of PTC-expression we distinguish the following cases
\begin{enumerate}
  \item if $\mathcal{T}(\mathbf{r})$ is $E(\mathbf{r})$,
  we first evaluate the RA-expression $E$ in the database instance ${\DIN}$
  according to RA-expression evaluation rules (\cite{BeVy:TODS}) and denote
  the resulting relation as $E^{\DIN}$, then we set
  $\mathcal{T}^{\DIN}( \| \mathbf{r} \|_{r} ) =
    E^{\DIN}( \| \mathbf{r} \|_{r} )$,
  \item if $\mathcal{T}(\mathbf{r})$ is
  $\mathcal{T}_1(\mathbf{r}_1) \circop \mathcal{T}_2(\mathbf{r}_2)$,
  where $\circop$ is on of the following symbols $\otimes, \wedge,
  \rightarrow$, and $\mathbf{r} = \mathbf{r}_1 \cup \mathbf{r}_2$,
  first we get the scores $\mathcal{T}^{\DIN}_1( \| \mathbf{r}_1 \|_{r} )$ and
   $\mathcal{T}^{\DIN}_2( \| \mathbf{r}_2 \|_{r} )$
  with the valuation induced by $r$. Then we set
  $\mathcal{T}^{\DIN}( \| \mathbf{r} \|_{r} ) =
    \mathcal{T}^{\DIN}_1(\| \mathbf{r}_1 \|_{r})
    \mathop{\circ}
    \mathcal{T}^{\DIN}_2(\| \mathbf{r}_2 \|_{r})$,
  where $\circop$ is one of the following operations
  $\otimes, \wedge, \rightarrow$.
  \item if $\mathcal{T}(\mathbf{r})$ is $\nabla\mathcal{T'}(\mathbf{r})$ or
  $\Delta\mathcal{T'}(\mathbf{r})$ we get the score
  $\mathcal{T'}^{\DIN}( \| \mathbf{r} \|_{r} )$.

  If $\mathcal{T}(\mathbf{r})$ is $\nabla\mathcal{T'}(\mathbf{r})$ we set
  \[
    \mathcal{T}^{\DIN}( \| \mathbf{r} \|_{r} ) =
      \begin{cases}
        1 & \text{if } \mathcal{T'}^{\DIN}(\| \mathbf{r} \|_r) > 0, \\
        0 & \text{otherwise.}
      \end{cases}
  \]
  If $\mathcal{T}(\mathbf{r})$ is $\Delta\mathcal{T'}(\mathbf{r})$ we set
  \[
    \mathcal{T}^{\DIN}( \| \mathbf{r} \|_{r} ) =
      \begin{cases}
        1 & \text{if } \mathcal{T'}^{\DIN}(\| \mathbf{r} \|_r) = 1, \\
        0 & \text{otherwise.}
      \end{cases}
  \]
\item if $\mathcal{T}(\mathbf{r})$ is $\ptcsupprime$ or $\ptcinfprime$ where
$\mathbf{r} = \mathbf{r_2}$, first we get the scores
$\mathcal{T'}^{\DIN}( \| \mathbf{r_1} \cup \mathbf{r_2} \|_{r r'} )$ with the
valuation induced by the join of tuples $r$ and $r'$ for every
$r' \in eadom^{\DIN}_{R_{\mathbf{r}_1}}$. Note that the tuples $r$ and $r'$ are
always joinable as the relation schemes $R_{\mathbf{r}_1}$ and
$R_{\mathbf{r}_2} = R$ are disjoint (from the definition of PTC-expression).
Since $eadom^{\DIN}_{R_{\mathbf{r}_1}}$ is finite, we obtain a finite set of
scores $\{ \mathcal{T'}^{\DIN}( \| \mathbf{r_1} \cup \mathbf{r_2} \|_{r r'} )
\mid
r' \in eadom^{\DIN}_{R_{\mathbf{r}_1}} \}$.

If $\mathcal{T}(\mathbf{r})$ is $\ptcsupprime$ we set
\[
  \mathcal{T}^{\DIN}( \| \mathbf{r} \|_{r} ) =
  \textstyle\bigvee
  \{ \mathcal{T'}^{\DIN}( \| \mathbf{r_1} \cup \mathbf{r_2} \|_{r r'} )
  \mid
  r' \in eadom^{\DIN}_{R_{\mathbf{r}_1}} \}.
\]
If $\mathcal{T}(\mathbf{r})$ is $\ptcinfprime$ we set
\[
  \mathcal{T}^{\DIN}( \| \mathbf{r} \|_{r} ) =
  \textstyle\bigwedge
  \{ \mathcal{T'}^{\DIN}( \| \mathbf{r_1} \cup \mathbf{r_2} \|_{r r'} )
  \mid
  r' \in eadom^{\DIN}_{R_{\mathbf{r}_1}} \}.
\]
\end{enumerate}

\subsubsection{Splitting principle}
Consider a PTC-expression $\mathcal{T}(\ldots,\blmathbb{r},\ldots)$ such that the
tuple variable $\blmathbb{r}$ is on relation scheme $R$. If we replace the tuple
 variable $\blmathbb{r}$ with two (or more) fresh tuple variables
$\blmathbb{r}_1, \blmathbb{r}_2$ on $R_1$ and $R_2$ such that
$R_1 \cup R_2 = R$, we obtain a PTC-expression
$\mathcal{T'}(\ldots,\blmathbb{r}_1, \blmathbb{r}_2,\ldots)$ that differs only in
the set of free variables. Despite being different on the syntactic level it is
straightforward to see that for any database instance $\DIN$ we have
$\mathcal{T}^{\DIN} = \mathcal{T'}^{\DIN}$.

From the semantic point of view, we are free to ``split'' free tuple variables
and ``join'' them back without changing the meaning of the PTC-expression.
We call this ``the splitting principle''.

\subsection{Equivalence of PTC and Relational Algebra}
In this section we show that the Pseudo Tuple Calculus and Relational Algebra
are equivalent. First, observe that if we evaluate any RA-expression $E$ on
relation scheme $R$ in a database instance $\DIN$,
the relation $E^{\DIN}$ may contain tuples from $eadom^{\DIN}_R$ only, since
$eadom^{\DIN}_R$ consists of all tuples that can possibly be built from the
values available in the database instance, i.\,e. we have
\begin{align}\label{PTC:RaScope}
E^{\DIN}(r) = 0 \text{ whenever } r \not\in eadom^{\DIN}_R.
\end{align}
Using this observation we can easily prove the following theorem.
\begin{theorem}\label{PTC:RaToPTC}
For any RA-expression $E$ on relation scheme $R$
there is a PTC-expression $\mathcal{T}(\mathbf{r})$ on $R$ such that
for any database instance $\DIN$ we have
$E^{\DIN}(r) = \mathcal{T}^{\DIN}(r)$ for all $r \in \Tupl{R}$.
\end{theorem}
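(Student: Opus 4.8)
The plan is to produce, for each RA-expression $E$ on $R$, an essentially trivial PTC-expression evaluating to the same relation. The natural candidate is the atomic PTC-expression $E(\tv{r})$ obtained by associating to $E$ a single fresh tuple variable $\tv{r}$ on the scheme $R$; this is the first syntactic rule with $n = 1$ and $R_1 = R$, so $E(\tv{r})$ is well-formed on $R$ because $R = R_1$. It then remains to check that $E^{\DIN}(r) = \mathcal{T}^{\DIN}(r)$ for every $r \in \Tupl{R}$ and every instance $\DIN$, which I would do by distinguishing whether $r$ lies in the extended active domain $eadom^{\DIN}_R$.

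For $r \in eadom^{\DIN}_R$, the evaluation rule for atomic PTC-expressions gives $\mathcal{T}^{\DIN}(\|\tv{r}\|_r) = E^{\DIN}(\|\tv{r}\|_r)$. Since the valuation induced by $r$ assigns $\tv{r}$ the projection $r(R) = r$, we have $\|\tv{r}\|_r = r$, and the equality $\mathcal{T}^{\DIN}(r) = E^{\DIN}(r)$ follows immediately. For $r \notin eadom^{\DIN}_R$, the definition of PTC-evaluation stipulates $\mathcal{T}^{\DIN}(r) = 0$, while the scope observation~\eqref{PTC:RaScope} guarantees $E^{\DIN}(r) = 0$; hence the two sides agree here as well. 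Combining the two cases proves the claim.

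The only substantive point---and thus the crux of the argument---is the alignment of the two notions of admissible tuples: PTC-expressions are supported on $eadom^{\DIN}_R$ by construction, and the statement would fail were an RA-expression able to assign a nonzero score to some tuple outside this set. That possibility is precisely excluded by~\eqref{PTC:RaScope}, so all the real content sits in that observation; the rest is a direct unwinding of the atomic-expression semantics. I would also note, via the splitting principle, that any decomposition of $\tv{r}$ into several tuple variables whose schemes jointly cover $R$ works equally well, so the choice of a single variable is only for convenience.
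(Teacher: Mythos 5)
Your proposal is correct and follows exactly the paper's own argument: take the atomic PTC-expression $E(\blmathbb{r})$ with a single tuple variable on $R$ and invoke observation~\eqref{PTC:RaScope} to handle tuples outside $eadom^{\DIN}_R$. You merely spell out the two cases that the paper's terse proof leaves implicit.
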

\begin{proof}
Since any RA-expression is directly an (atomic) PTC-expression we can take
$E(\blmathbb{r})$ with a single tuple variable $ \blmathbb{r} $ on relation
scheme $R$ as the sought PTC-expression $\mathcal{T}(\mathbf{r})$.
From the definition of PTC-expression evaluation and the observation
\eqref{PTC:RaScope} we conclude that $E^{\DIN}(r) = \mathcal{T}^{\DIN}(r) $
holds for all $r \in \Tupl{R}$.
\end{proof}
It follows that the Pseudo Tuple Calculus is at least as powerful as the
Relational Algebra. Before proving the converse theorem we need one more
observation. Recall that the relation $eadom^{\DIN}_R$ plays an important role
in PTC-expression evaluation as it serves the purpose of an implicit range
(or universe) for evaluation. Since evaluation of RA-expressions is
unconstrained and takes all tuples in account we need to be able to construct a
RA-expression $\RAeadom{R}$ that will evaluate to $eadom^{\DIN}_R$ and will act
as an explicit range for evaluation of RA-expressions.

It is easy to see that the active domain $adom(y, \mathcal{D})$ for
the given attribute $y$ and relation $\mathcal{D}$ can be computed by evaluating
the RA-expression $\mathcal{A}_y(\mathbb{D}) = \pi_{\{y\}}(\nabla \mathbb{D})$
in database instance $\DIN$, where $\mathbb{D}$ is a relation symbol evaluated
to relation $\mathcal{D}$ by the database instance. For the extended active
domain $eadom^{\DIN}(y)$ for an attribute $y$ the RA-expression is
$\mathcal{E}_y = \bigcup_{i \in I} \mathcal{A}_y(\mathbb{D}_i^y)$,
where $\mathbb{D}_i^y$ are
relation symbols whose relation scheme contains attribute $y$ and the database
instance $\DIN$ interprets each relation symbol $\mathbb{D}_i^y$
as relation $\mathcal{D}_i^y$.
Finally, we get the extended active domain $eadom^{\DIN}_R$ for scheme
$R = \{ y_1, \ldots, y_n \}$
by evaluating $\RAeadom{R} =
\mathcal{E}_{y_1} \bowtie \cdots \bowtie \mathcal{E}_{y_n}$
in database instance $\DIN$. In other words we have
$eadom^{\DIN}_R = \RAeadomEV{R}$.

\begin{remark}
As an aside, if we use the modified definition of extended active domain that
 allows introduction of new values by singleton relations, we need to modify the
 previous definition of $\mathcal{E}_y$ to reflect the extended meaning of
$eadom^{\DIN}_R$. The definition becomes
$\mathcal{E}_y =
  \bigcup_{i \in I} \left( \mathcal{A}_y(\mathbb{D}_i^y) \right) \cup
  \bigcup_{j=1}^{n} [y {\colon} \! \mathfrak{c}_j]$, where
$\mathbb{D}_i^y$ are relation symbols whose relation scheme contains attribute
$y$ and $\mathfrak{c}_j$ are symbols denoting new values from the domain
of attribute $y$ such that the database instance $\DIN$ interprets each relation
 symbol $\mathbb{D}_i^y$ as relation $\mathcal{D}_i^y$ and each symbol
$\mathfrak{c}_j$ as the new value $\mathfrak{c}_j^{\DIN}$ from the respective
domain.
\end{remark}

\begin{theorem}\label{PTC:PTCtoRA}
For any PTC-expression $\mathcal{T}(\mathbf{r})$ with
 $\mathbf{r} = \{ \blmathbb{r}_1, \ldots, \blmathbb{r}_n \}$, where the
 tuple variables $\blmathbb{r}_i$ are on relation schemes $R_i$, there is a
 RA-expression $F$ on relation scheme $R = \bigcup_{i=1}^{n} R_i$ such that
 for any database instance $\DIN$ we have
$F^{\DIN}(r) = \mathcal{T}^{\DIN}(r)$ for all $r \in \Tupl{R}$.
\end{theorem}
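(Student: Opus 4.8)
The plan is to prove the statement by structural induction on the PTC-expression $\mathcal{T}(\mathbf{r})$, building the equivalent RA-expression $F$ one constructor at a time and invoking at each step the matching operation of the relational algebra of~\cite{BeVy:TODS}. The key device throughout is the RA-expression $\RAeadom{R}$ constructed just above the statement, which evaluates to $eadom^{\DIN}_R$ in every instance $\DIN$; since $eadom^{\DIN}_R$ is finite and takes only the degrees $0$ and $1$, joining with (a projection of) $\RAeadom{R}$ lets us both pad a partial result to a larger scheme without altering its scores (because $a \otimes 1 = a$) and, when needed, mask a result back onto the active domain so that it stays a legitimate finitely supported RDT. By the splitting principle we may treat the free variables of each subexpression as a single variable on the union of their schemes, so the induction hypothesis supplies, for every subexpression $\mathcal{T}_i$, an RA-expression $F_i$ on $R_{\mathbf{r}_i}$ with $F_i^{\DIN}(r) = \mathcal{T}_i^{\DIN}(r)$ for all $r$.

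For the base case $\mathcal{T}(\mathbf{r}) = E(\mathbf{r})$ I take $F = E$: the scheme of $E$ is $R = \bigcup_i R_i$, and by~\eqref{PTC:RaScope} the relation $E^{\DIN}$ is supported on $eadom^{\DIN}_R$, which is exactly the range over which the semantics evaluates $E(\mathbf{r})$. In the binary case I split on the connective. When $\circop$ is $\otimes$ I take $F = \join{}{F_1}{F_2}$; writing $R = R_{\mathbf{r}_1}\setminus R_{\mathbf{r}_2}$, $S = R_{\mathbf{r}_1}\cap R_{\mathbf{r}_2}$, $T = R_{\mathbf{r}_2}\setminus R_{\mathbf{r}_1}$, definition~\eqref{eqn:natjoin} yields $F^{\DIN}(r) = F_1^{\DIN}(r(R_{\mathbf{r}_1})) \otimes F_2^{\DIN}(r(R_{\mathbf{r}_2}))$, matching the $\otimes$-clause. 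When $\circop$ is $\wedge$ or $\rightarrow$ the schemes need not follow the join pattern, so I first cylindrically extend both operands to the common scheme $R = R_{\mathbf{r}_1} \cup R_{\mathbf{r}_2}$ by putting $F_1' = F_1 \bowtie \RAeadom{R \setminus R_{\mathbf{r}_1}}$ and $F_2' = F_2 \bowtie \RAeadom{R \setminus R_{\mathbf{r}_2}}$, which preserves the original scores on active-domain tuples. I then apply the componentwise operation of the algebra: intersection $F_1' \cap F_2'$ for $\wedge$, and the componentwise residuum for $\rightarrow$, in the latter case intersecting with $\RAeadom{R}$ so that the tuples outside $eadom^{\DIN}_R$ (where $0 \rightarrow 0 = 1$ would otherwise intrude) receive score $0$ and the result stays finite.

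For the modality case I take $F = \nabla F'$ or $F = \Delta F'$, using the operations $\nabla$ and $\Delta$ of the algebra (the former already appeared above in $\mathcal{A}_y$); these reproduce the two-valued clauses verbatim and preserve finiteness. The two quantifier cases are the heart of the argument; let $G$ be the RA-expression for the body $\mathcal{T}'(\mathbf{r}_1 \cup \mathbf{r}_2)$ on $R_{\mathbf{r}_1} \cup R_{\mathbf{r}_2}$. For $\bigvee_{\!\mathbf{r}_1}\mathcal{T}'$ I take the projection $F = \project{R_{\mathbf{r}_2}}{G}$, since by~\eqref{def:project} and the finiteness of $G^{\DIN}$ this computes $\bigvee_{r' \in \Tupl{R_{\mathbf{r}_1}}} G^{\DIN}(rr') = \bigvee_{r' \in eadom^{\DIN}_{R_{\mathbf{r}_1}}} G^{\DIN}(rr')$, exactly the prescribed supremum. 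For $\bigwedge_{\mathbf{r}_1}\mathcal{T}'$ the naive infimum is domain dependent, so here I invoke the range-equipped division~\eqref{def:rdiv}, using dividend $G$, divisor $\RAeadom{R_{\mathbf{r}_1}}$, and range $\RAeadom{R_{\mathbf{r}_2}}$ (the disjointness $R_{\mathbf{r}_1}\cap R_{\mathbf{r}_2} = \emptyset$ required by this PTC clause is precisely the scheme pattern the division needs), which gives
\[
  \bigl(\rdiv{\RAeadom{R_{\mathbf{r}_2}}}{G}{\RAeadom{R_{\mathbf{r}_1}}}\bigr)^{\DIN}(r)
  = \textstyle\bigwedge_{r' \in \Tupl{R_{\mathbf{r}_1}}}
     \bigl(\RAeadomEV{R_{\mathbf{r}_2}}(r) \otimes (\RAeadomEV{R_{\mathbf{r}_1}}(r') \rightarrow G^{\DIN}(rr'))\bigr).
\]
For $r \in eadom^{\DIN}_{R_{\mathbf{r}_2}}$ the leading factor is $1$, while $\RAeadomEV{R_{\mathbf{r}_1}}(r')$ turns the implication into $G^{\DIN}(rr')$ on the active domain and into the neutral $1$ off it, so the right-hand side reduces to $\bigwedge_{r' \in eadom^{\DIN}_{R_{\mathbf{r}_1}}} G^{\DIN}(rr')$; off the active domain the result is $0$. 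This matches the $\bigwedge$-clause and closes the induction. The main obstacle is exactly this interplay between the implicit finite range $eadom^{\DIN}_R$ of the calculus and domain independence: $\rightarrow$ and $\bigwedge$ are the connectives whose literal componentwise and infimum readings escape finite support, and the whole construction hinges on materializing that range as $\RAeadom{R}$ and using it both to confine $\rightarrow$ and to supply the range of the division implementing $\bigwedge$.
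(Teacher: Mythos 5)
Your proposal is correct and follows essentially the same route as the paper's proof: structural induction with $F=E$ in the base case, natural join for $\otimes$, active-domain padding plus intersection for $\wedge$, the range-equipped residuum for $\rightarrow$, $\nabla/\Delta$ passed through directly, projection for $\bigvee$, and the ranged division \eqref{def:rdiv} with divisor $\RAeadom{R_{\mathbf{r}_1}}$ and range $\RAeadom{R_{\mathbf{r}_2}}$ for $\bigwedge$. The only deviations are cosmetic (padding with $\RAeadom{R\setminus R_{\mathbf{r}_i}}$ instead of $\RAeadom{R_{\mathbf{r}_j}}$, and phrasing the ranged residuum as residuum-then-mask), and your verification of the quantifier cases matches the paper's computation.
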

\begin{proof}
The theorem is proved by induction on the complexity of the PTC-expression.
In each step, we show the RA-expression $F$ that forms the counterpart to the
PTC-expression $\mathcal{T}(\mathbf{r})$ in question. Furthermore, we show that
the results of evaluating both RA- and PTC-expression coincide, i.\,e. the
relations $F^{\DIN}$ and $\mathcal{T}^{\DIN}$ have the same relation scheme and
contain the same tuples. Recall that for a set of tuple variables
$\mathbf{r'}$ on the relation scheme $R_{\mathbf{r'}}$ and a tuple
$r \in \Tupl{R}$ such that $R_{\mathbf{r'}} \subseteq R$ we have
$ \| \mathbf{r'} \|_{r} = r(R_{\mathbf{r'}}) $.

Let us have a PTC-expression $\mathcal{T}(\mathbf{r})$, where
$\mathbf{r} = \{ \blmathbb{r}_1, \ldots, \blmathbb{r}_n \}$
such that each tuple variable $\blmathbb{r}_i$ is on relation scheme $R_i$.
The relation scheme of the relation $\mathcal{T}^{\DIN}$ is
$R = \bigcup_{i=1}^{n} R_i$. We obtain the equivalent RA-expression $F$ as
follows.
\begin{enumerate}
  \item If $\mathcal{T}(\mathbf{r})$ is $E(\mathbf{r})$, the sought
  RA-expression $F$ is $E$.

  Since the relation scheme of $F$ is $R$, the relations $\mathcal{T}^{\DIN}$
  and $F^{\DIN}$ have the same relation scheme.
  For any tuple $r \in eadom^{\DIN}_R$ we have
  \[ \mathcal{T}^{\DIN}(r)
    = \mathcal{T}^{\DIN}( \| \mathbf{r} \|_r )
    = E^{\DIN}(\| \mathbf{r} \|_r)
    = F^{\DIN}(r) \] from the definition of PTC-expression evaluation.

  From \eqref{PTC:RaScope} it follows that for all tuples
  $r \not\in eadom^{\DIN}_R$ we have $F^{\DIN}(r) = 0$. Together, we have
  $\mathcal{T}^{\DIN}(r) = F^{\DIN}(r)$ for all tuples $r \in \Tupl{R}$.
  \item If $\mathcal{T}(\mathbf{r})$ is
  $\mathcal{T}_1(\mathbf{r}_1) \circop \mathcal{T}_2(\mathbf{r}_2)$,
  where $\circ$ is on of the following symbols $\otimes, \wedge,
  \rightarrow$, $\mathbf{r} = \mathbf{r}_1 \cup \mathbf{r}_2$ and
  $R = R_{\mathbf{r}_1} \cup R_{\mathbf{r}_2}$,
  then from the induction hypothesis we have RA-expressions $E_1$ on relation
  scheme $R_{\mathbf{r}_1}$ and $E_2$ on relation scheme $R_{\mathbf{r}_2}$
  corresponding to PTC-subexpressions $\mathcal{T}_1(\mathbf{r}_1)$ and
  $\mathcal{T}_2(\mathbf{r}_2)$, respectively, such that
  $\mathcal{T}_1^{\DIN}(r_1) = E^{\DIN}_1(r_1)$ and
  $\mathcal{T}_2^{\DIN}(r_2) = E^{\DIN}_2(r_2)$ for all
  $r_1 \in \Tupl{R_1}$ and $r_2 \in \Tupl{R_2}$.

  According to the symbol $\circ$ we distinguish three cases:
  \begin{enumerate}
    \item If $\circ$ is $\otimes$, then we put $F = E_1 \bowtie E_2$.

    The relation scheme of $F$ is $R_{\mathbf{r}_1} \cup R_{\mathbf{r}_2}$
    as required. We have
    \begin{align*}
      \mathcal{T}^{\DIN}(r) & = \mathcal{T}^{\DIN}(\| \mathbf{r} \|_r) \\
      & = \mathcal{T}^{\DIN}_1(\| \mathbf{r}_1 \|_r)
        \otimes
        \mathcal{T}^{\DIN}_2(\| \mathbf{r}_2 \|_r)  \\
      & = \mathcal{T}^{\DIN}_1(r (R_{\mathbf{r}_1}) )
        \otimes
        \mathcal{T}^{\DIN}_2(r (R_{\mathbf{r}_2})) \\
      & = E^{\DIN}_1(r (R_{\mathbf{r}_1}))
        \otimes
        E^{\DIN}_2(r(R_{\mathbf{r}_2})) \\
      & = (E_1 \bowtie E_2)^{\DIN}
          (r (R_{\mathbf{r}_1}) r (R_{\mathbf{r}_2})) \\
      & = F^{\DIN}(r)
    \end{align*}
    for all tuples $r \in eadom^{\DIN}_R$.

    For any tuple $r \not\in eadom^{\DIN}_R$ either or both of
    $r (R_{\mathbf{r}_1}) \not\in eadom^{\DIN}_{R_{\mathbf{r}_1}}$ and
    $r (R_{\mathbf{r}_2}) \not\in eadom^{\DIN}_{R_{\mathbf{r}_2}}$ must
    hold, otherwise we would arrive at contradiction.
    Without loss of generality let us assume that
    $r (R_{\mathbf{r}_1}) \not\in eadom^{\DIN}_{R_{\mathbf{r}_1}}$.
    Then we have $\mathcal{T}^{\DIN}_1(r (R_{\mathbf{r}_1}) ) = 0$ and from
    the induction hypothesis we also have
    $E^{\DIN}_1(r (R_{\mathbf{r}_1})) = 0$.
    From the properties of $\otimes$ we conclude that $F^{\DIN}(r) = 0$ for
    $r \not\in eadom^{\DIN}_R$.
    \item If $\circ$ is $\wedge$, then we put
    $F = (E_1 \bowtie \RAeadom{R_{\mathbf{r}_2}})
       \cap
       (E_2 \bowtie \RAeadom{R_{\mathbf{r}_1}})$.

    Since both $E_1 \bowtie \RAeadom{R_{\mathbf{r}_2}}$ and
    $E_2 \bowtie \RAeadom{R_{\mathbf{r}_1}}$ are on relation scheme
    $R_{\mathbf{r}_1} \cup R_{\mathbf{r}_2}$, the RA-expression $F$ is
    well-defined and its relation scheme matches the relation scheme of the
    PTC-expression.

    Now, observe that for any tuple $r \in eadom^{\DIN}_R$ the following
    holds
    \[
      E_1^{\DIN}(r (R_{\mathbf{r}_1}) )
      = E_1^{\DIN}(r (R_{\mathbf{r}_1}) )
        \otimes
        \underbrace{eadom^{\DIN}_{R_{\mathbf{r}_2}}
          (r (R_{\mathbf{r}_2}) )}_{= 1}
      = (E_1 \bowtie \RAeadom{R_{\mathbf{r}_2}})^{\DIN}(r)
    \]
    Dually, it holds for $E^{\DIN}_2$ as well. To put the in words, we can
    ``extend'' the relation scheme of some relation without changing the
    scores of tuples in this relation.  Hence, we have
    \begin{align*}
      \mathcal{T}^{\DIN}(r) & =  \mathcal{T}^{\DIN}(\| \mathbf{r} \|_r) \\
      & = \mathcal{T}^{\DIN}_1(\| \mathbf{r}_1 \|_r)
        \wedge
        \mathcal{T}^{\DIN}_2(\| \mathbf{r}_2 \|_r)  \\
      & = \mathcal{T}^{\DIN}_1(r (R_{\mathbf{r}_1}) )
        \wedge
        \mathcal{T}^{\DIN}_2(r (R_{\mathbf{r}_2})) \\
      & = E^{\DIN}_1(r (R_{\mathbf{r}_1}))
        \wedge
        E^{\DIN}_2(r(R_{\mathbf{r}_2})) \\
      & = (E_1 \bowtie \RAeadom{R_{\mathbf{r}_2}})^{\DIN}(r)
        \wedge
        (E_2 \bowtie \RAeadom{R_{\mathbf{r}_1}})^{\DIN}(r) \\
      & = \left((E_1 \bowtie \RAeadom{R_{\mathbf{r}_2}})
        \cap
        (E_2 \bowtie \RAeadom{R_{\mathbf{r}_1}})\right)^{\DIN}(r) \\
      & = F^{\DIN}(r)
    \end{align*}
    for all tuples $r \in eadom^{\DIN}_R$.

    For any tuple $r \not\in eadom^{\DIN}_R$, use the same argument as for
    the case with $\otimes$ concluding that $F^{\DIN}(r) = 0$ for
    $r \not\in eadom^{\DIN}_R$.
    \item If $\circ$ is $\rightarrow$, then we put
    $F = (E_1 \bowtie \RAeadom{R_{\mathbf{r}_2}})
          \rightarrowtriangle^{\RAeadom{R}}
          (E_2 \bowtie \RAeadom{R_{\mathbf{r}_1}})$.

    Since all $E_1 \bowtie
    \RAeadom{R_{\mathbf{r}_2}}$, $E_2 \bowtie \RAeadom{R_{\mathbf{r}_1}}$, and
    $\RAeadom{R}$ are on relation scheme
    $R_{\mathbf{r}_1} \cup R_{\mathbf{r}_2}$, the RA-expression $F$ is
    well-defined and its relation scheme matches the relation scheme of the
    PTC-expression.

    Observe that since $\mathcal{T}^{\DIN}(r) > 0$ only for
    tuples $r \in eadom^{\DIN}_R$ and
    $ \RAeadomEV{R}(r) = eadom^{\DIN}_R(r) = 1$ for any tuple
    $r \in eadom^{\DIN}_R$, it holds that
    $\mathcal{T}^{\DIN}(r) =
      \RAeadomEV{R}(r)
      \otimes
      \mathcal{T}^{\DIN}(r).$
    Using previous observations we have
    \begin{align*}
    \mathcal{T}^{\DIN}(r)
      & = \RAeadomEV{R}(r)
        \otimes
        \mathcal{T}^{\DIN}( \| \mathbf{r} \|_r ) \\
      & = \RAeadomEV{R}(r)
        \otimes
        \left(\mathcal{T}^{\DIN}_1(\| \mathbf{r}_1 \|_r)
          \rightarrow
          \mathcal{T}^{\DIN}_2(\| \mathbf{r}_2 \|_r) \right)  \\
      & = \RAeadomEV{R}(r)
        \otimes
        \left(\mathcal{T}^{\DIN}_1(r (R_{\mathbf{r}_1}) )
          \rightarrow
          \mathcal{T}^{\DIN}_2(r (R_{\mathbf{r}_2})) \right) \\
      & = \RAeadomEV{R}(r)
        \otimes
        \left(E^{\DIN}_1(r (R_{\mathbf{r}_1}))
          \rightarrow E^{\DIN}_2(r(R_{\mathbf{r}_2})) \right) \\
      & = \RAeadomEV{R}(r)
        \otimes
        \left((E_1 \bowtie \RAeadom{R_{\mathbf{r}_2}})^{\DIN}(r)
          \rightarrow
          (E_2 \bowtie \RAeadom{R_{\mathbf{r}_1}})^{\DIN}(r) \right) \\
      & = \left( (E_1 \bowtie \RAeadom{R_{\mathbf{r}_2}})
        \rightarrowtriangle^{\RAeadom{R}}
        (E_2 \bowtie \RAeadom{R_{\mathbf{r}_1}}) \right)^{\DIN}(r) \\
      & = F^{\DIN}(r)
    \end{align*}
  \end{enumerate}
  for all tuples $r \in eadom^{\DIN}_R$. For any tuple
  $r \not\in eadom^{\DIN}_R$ we have $\RAeadomEV{R}(r) = 0$. From the
  properties of $\otimes$ we conclude that $F^{\DIN}(r) = 0$ for
  $r \not\in eadom^{\DIN}_R$.
  \item If $\mathcal{T}(\mathbf{r})$ is $\nabla\mathcal{T'}(\mathbf{r})$ or
  $\Delta\mathcal{T'}(\mathbf{r})$, then from the induction hypothesis we have
   a RA-expression $E$ on relation scheme $R$ corresponding to
  PTC-subexpression $\mathcal{T'}(\mathbf{r})$, such that
  $\mathcal{T'}^{\DIN}(r) = E^{\DIN}(r)$ for all $r \in \Tupl{R}$.

  We put $F = \nabla E$ or $F = \Delta E$, respectively.

  In both cases, the relation scheme of $F$ is $R$ as required. Assuming that
  the symbol $\Box$ denotes $\nabla$ or $\Delta$ we have
  \[
    \mathcal{T}^{\DIN}(r) = \mathcal{T}^{\DIN}( \| \mathbf{r} \|_r )
    = \Box \mathcal{T'}^{\DIN}( \| \mathbf{r} \|_r )
    = \Box \mathcal{T'}^{\DIN}( r )
    = \Box E^{\DIN}(r)
    = F^{\DIN}(r),
  \]
  for all $r \in eadom^{\DIN}_R$.

  For any tuple $r \not\in eadom^{\DIN}_R$ we have
  $\mathcal{T'}^{\DIN}( r ) = 0$ and from the induction hypothesis we also
  have $E^{\DIN}(r) = 0$. From the definition of $\nabla$ or $\Delta$ we
  conclude that $F^{\DIN}(r) = 0$ for $r \not\in eadom^{\DIN}_R$.

  \item If $\mathcal{T}(\mathbf{r})$ is $\ptcsupprime$ or $\ptcinfprime$
  where $\mathbf{r} = \mathbf{r_2}$, $R = R_{\mathbf{r}_2}$ and
  $R_{\mathbf{r}_1} \cap R_{\mathbf{r}_2} = \emptyset$, then from the
  induction hypothesis we have a RA-expression $E$ on relation scheme
  $R_{\mathbf{r}_1} \cup R$ corresponding to the PTC-subexpression
  $\mathcal{T'}(\mathbf{r}_1, \mathbf{r}_2)$, such that
  $\mathcal{T'}^{\DIN}(r r') = E^{\DIN}(r r')$ for all $r \in \Tupl{R}$ and
  $r' \in \Tupl{R_{\mathbf{r}_1}}$.
  Note that tuples $r$ and $r'$ are always joinable since the relation schemes
  $R_{\mathbf{r}_1}$ and $R$ are disjoint.

We distinguish two cases:
\begin{enumerate}
  \item if $\mathcal{T}(\mathbf{r})$ is $\ptcsupprime$, then we put
  $F = \pi_{R}(E)$.

  The relation scheme of $F$ is $R$ as required. We have
  \begin{align*}
    \mathcal{T}^{\DIN}(r) & =  \mathcal{T}^{\DIN}(\| \mathbf{r} \|_r) \\
      & = \textstyle\bigvee \{ \mathcal{T'}^{\DIN}( \| \mathbf{r_1}
        \cup
        \mathbf{r_2} \|_{r r'} )
        \mid
        r' \in eadom^{\DIN}_{R_{\mathbf{r}_1}} \} \\
      & = \textstyle\bigvee \{ \mathcal{T'}^{\DIN}(r r')
        \mid
        r' \in eadom^{\DIN}_{R_{\mathbf{r}_1}} \} \\
      & = \textstyle\bigvee \{ E^{\DIN}(r r')
        \mid
        r' \in eadom^{\DIN}_{R_{\mathbf{r}_1}} \} \\
      & = \textstyle\bigvee \{ E^{\DIN}(r r')
        \mid
        r' \in \Tupl{R_{\mathbf{r}_1}} \} \\
      & = \left(\pi_{R}(E)\right)^{\DIN}(r)
      = F^{\DIN}(r)
    \end{align*}
  for all $r \in eadom^{\DIN}_{R}$.

  Observe that extending the range of $r'$ from
  $eadom^{\DIN}_{R_{\mathbf{r}_1}}$ to $\Tupl{R_{\mathbf{r}_1}}$ cannot change
  the score $\bigvee\{ E^{\DIN}(r r')\}$ since for
  $r' \not\in eadom^{\DIN}_{R_{\mathbf{r}_1}}$ we have
  $\mathcal{T'}^{\DIN}(r r') = 0$ and thus $E^{\DIN}(r r') = 0$ for any
  $r \in \Tupl{R}$. Furthermore, for any $a \in L$ it holds that
  $a \vee 0 = a$. Hence, we have
  \[
    \textstyle\bigvee \{
      E^{\DIN}(r r')
      \mid
      r' \in eadom^{\DIN}_{R_{\mathbf{r}_1}}
    \}
    =
    \textstyle\bigvee \{ E^{\DIN}(r r')
    \mid
    r' \in \Tupl{R_{\mathbf{r}_1}} \}.
  \]

  Now we show that $F^{\DIN}(r) = 0$ for all $r \not\in eadom^{\DIN}_{R}$.
  For any $r \not\in eadom^{\DIN}_{R}$ we have $E^{\DIN}(r r') = 0$ and thus
  $F^{\DIN}(r) = \bigvee \{0, 0, \ldots \} = 0$.

  \item if $\mathcal{T}(\mathbf{r})$ is  $\ptcinfprime$, we put
  $F = E \div^{\RAeadom{R}} \RAeadom{R_{\mathbf{r}_1}}$.

  The relation scheme of $F$ is $R$ as required. We have
  \begin{align*}
    \mathcal{T}^{\DIN}(r) & =  \mathcal{T}^{\DIN}(\| \mathbf{r} \|_r) \\
    & = \textstyle\bigwedge \{
      \mathcal{T'}^{\DIN}( \| \mathbf{r_1} \cup \mathbf{r_2} \|_{r r'} )
      \mid
      r' \in eadom^{\DIN}_{R_{\mathbf{r}_1}} \} \\
    & = \textstyle\bigwedge \{ \mathcal{T'}^{\DIN}(r r')
      \mid
      r' \in eadom^{\DIN}_{R_{\mathbf{r}_1}} \} \\
    & = \textstyle\bigwedge \{ E^{\DIN}(r r')
      \mid
      r' \in eadom^{\DIN}_{R_{\mathbf{r}_1}} \} \\
    & = \textstyle\bigwedge \{ \RAeadomEV{R_{\mathbf{r}_1}}(r')
      \rightarrow
      E^{\DIN}(r r') \mid r' \in \Tupl{R_{\mathbf{r}_1}} \} \\
    & = \textstyle\bigwedge \{ \RAeadomEV{R}(r)
      \otimes
      \bigl(\RAeadomEV{R_{\mathbf{r}_1}}(r')
      \rightarrow
      E^{\DIN}(r r') \bigr) \mid r' \in \Tupl{R_{\mathbf{r}_1}} \} \\
    & = (E \div^{\RAeadom{R}} \RAeadom{R_{\mathbf{r}_1}})^{\DIN}(r)
      = F^{\DIN}(r)
    \end{align*}
  for all $r \in eadom^{\DIN}_{R}$.

  Note that extending the range of $r'$ from $eadom^{\DIN}_{R_{\mathbf{r}_1}}$
   to $\Tupl{R_{\mathbf{r}_1}}$ cannot change the final score of $r$, since
   for any $r' \not\in eadom^{\DIN}_{R_{\mathbf{r}_1}}$ we have
  $ \RAeadomEV{R_{\mathbf{r}_1}}(r') \rightarrow E^{\DIN}(r r') = 1 $ and it
  holds that $a \wedge 1 = a$ for any $a \in L$.

  For any tuple $r \not\in eadom^{\DIN}_{R}$ we have $\RAeadomEV{R}(r) = 0$.
  Hence, we have $F^{\DIN}(r) = \bigwedge \{0, 0, \ldots \} = 0$ for
  $r \not\in eadom^{\DIN}_R$.

  Observe that instead of using \eqref{def:rdiv} we can alternatively use
  Date's Small Divide and put
  $F' = \RAeadom{R} \div_{\text{gsdo}}^{E} \RAeadom{R_{\mathbf{r}_1}}$ since
  it holds that
  \begin{align*}
    \mathcal{T}^{\DIN}(r) & =  \mathcal{T}^{\DIN}(\| \mathbf{r} \|_r) \\
    & = \textstyle\bigwedge \{ \mathcal{T'}^{\DIN}( \| \mathbf{r_1}
      \cup
      \mathbf{r_2} \|_{r r'} )
      \mid
      r' \in eadom^{\DIN}_{R_{\mathbf{r}_1}} \} \\
    & = \textstyle\bigwedge \{ \mathcal{T'}^{\DIN}(r r')
      \mid
      r' \in eadom^{\DIN}_{R_{\mathbf{r}_1}} \} \\
    & = \textstyle\bigwedge \{ E^{\DIN}(r r')
      \mid
      r' \in eadom^{\DIN}_{R_{\mathbf{r}_1}} \} \\
    & = \textstyle\bigwedge \{ \RAeadomEV{R_{\mathbf{r}_1}}(r')
      \rightarrow
      E^{\DIN}(r r') \mid r' \in \Tupl{R_{\mathbf{r}_1}} \} \\
    & = \RAeadomEV{R}(r)
      \otimes
      \textstyle\bigwedge \{ \bigl(\RAeadomEV{R_{\mathbf{r}_1}}(r')
      \rightarrow
      E^{\DIN}(r r') \bigr) \mid r' \in \Tupl{R_{\mathbf{r}_1}} \} \\
    & = (\RAeadom{R} \div_{\text{gsdo}}^{E}
        \RAeadom{R_{\mathbf{r}_1}})^{\DIN}(r) = F'^{\DIN}(r)
    \end{align*}
  for all $r \in eadom^{\DIN}_{R}$.

  For any tuple $r \not\in eadom^{\DIN}_{R}$ we have $\RAeadomEV{R}(r) = 0$.
  From the properties of $\otimes$ we conclude that $F'^{\DIN}(r) = 0$ for any
  $r \not\in eadom^{\DIN}_{R}$.
\end{enumerate}
\end{enumerate}

\end{proof}

\section{More on Relationships of Division Operations}\label{sec:more}
In this section we use the Pseudo Tuple Calculus (PTC) to show further
relationships of the division operations presented in this paper.
We utilize the PTC in the following way. Let us have an relational operation 
$op$ that accepts input relations $\mathcal{D}_1, \ldots, \mathcal{D}_n$ on 
relation schemes $R_1, \ldots, R_n$ and its output relation is on relation 
scheme $R$.
For the input relations we consider relation symbols
$\blmathbb{D}_1, \ldots, \blmathbb{D}_n$ on the respective relation schemes
$R_1, \ldots, R_n$.
Note that the relation symbols are themselves RA-expressions. Now using the
relation symbols we construct a PTC-expression $\mathcal{T}(\mathbf{r})$ on $R$
that is semantically equivalent to the operation in question. By semantical
equivalence we mean that if we evaluate the PTC-expression
$\mathcal{T}(\mathbf{r})$ in a database instance $\DIN$ that maps the relation
symbols to the input relations, i.\,e. we have
$\blmathbb{D}_1^{\DIN} = \mathcal{D}_1,
 \ldots,
 \blmathbb{D}_n^{\DIN} = \mathcal{D}_n$, we get that
\[ op(\mathcal{D}_1, \ldots, \mathcal{D}_n)(r) = \mathcal{T}^{\DIN}(r) \]
for all $r \in \Tupl{R}$. Note that this construction does not depend on the
actual content of the input relations. Furthermore we apply the Theorem
\ref{PTC:PTCtoRA} to transform the PTC-expression $\mathcal{T}(\mathbf{r})$ to
an equivalent RA-expression that uses only the fundamental operations of the
algebra and obtain the requested relationship.

We give an example to illustrate the notion of semantical equivalence. Consider
the division operation defined by \eqref{def:rdiv}, i.\,e., for RDTs
$\mathcal{D}_1$, $\mathcal{D}_2$, and $\mathcal{D}_3$ on $RS$, $S$, and $R$,
respectively, the \emph{division}
$\rdiv{\mathcal{D}_3}{\mathcal{D}_1}{\mathcal{D}_2}$ of $\mathcal{D}_1$ by
$\mathcal{D}_2$ which ranges over $\mathcal{D}_3$ is an RDT on $R$ defined by
\begin{align*}
  \bigl(\rdiv{\mathcal{D}_3}{\mathcal{D}_1}{\mathcal{D}_2}\bigr)(r) =
  \textstyle\bigwedge_{s \in \mathrm{Tupl}(S)}
  \bigl(
  \mathcal{D}_3(r) \otimes (\mathcal{D}_2(s) \rightarrow  \mathcal{D}_1(rs))
  \bigr),
\end{align*}
for each $r \in \mathrm{Tupl}(R)$.
Consider relation symbols $\blmathbb{D}_1, \blmathbb{D}_2$ and
$\blmathbb{D}_3$ on $RS$, $S$, and $R$, respectively.
Then the PTC-expression
\begin{align*}
  \mathcal{T}(\blmathbb{r}) =
  \textstyle\bigwedge_{\blmathbb{s}}
  \bigl(
  \blmathbb{D}_3(\blmathbb{r}) \otimes (\blmathbb{D}_2(\blmathbb{s})
    \rightarrow
  \blmathbb{D}_1(\blmathbb{rs}))
  \bigr),
\end{align*}
is semantically equivalent to the division operation. More precisely, for a
database instance $\DIN$ such that $\blmathbb{D}_1^{\DIN} = \mathcal{D}_1$,
$\blmathbb{D}_2^{\DIN} = \mathcal{D}_2$, and
$\blmathbb{D}_3^{\DIN} = \mathcal{D}_3$ we have
\[ \bigl(\rdiv{\mathcal{D}_3}{\mathcal{D}_1}{\mathcal{D}_2}\bigr)(r)
  =
  \mathcal{T}^{\DIN}(r) \]
for all $r \in \Tupl{R}$.
Now, we are ready to show the relationships among the division operations.

\begin{theorem}\label{EQN:GSDObyOUR}
Let $\mathcal{D}_1$, $\mathcal{D}_2$, and $\mathcal{D}_3$ be RDTs on $RS$, $S$,
and $R$, respectively, and let $\div_\mathrm{gsdo}$ be Date's Small Divide.
For the division operation defined by \eqref{def:rdiv} we have
\[ \bigl(\rdiv{\mathcal{D}_3}{\mathcal{D}_1}{\mathcal{D}_2}\bigr)(r) =
(\RAeadomEV{R} \div_\mathrm{gsdo}^{E^{\DIN}} \RAeadomEV{S})(r), \]
for all $r \in \Tupl{R}$ where
\begin{align*}
E^{\DIN} & =  \mathcal{D}_3 \bowtie \bigl( (\mathcal{D}_2 \bowtie \RAeadomEV{R})
  \rightarrowtriangle^{\RAeadomEV{RS}}
  \mathcal{D}_1 \bigr)
\end{align*}
and the extended active domains $\RAeadomEV{R}, \RAeadomEV{S}$, and
$\RAeadomEV{RS}$ contain tuples built only from the values from relations
$\mathcal{D}_1, \mathcal{D}_2$, and $\mathcal{D}_3$.
\end{theorem}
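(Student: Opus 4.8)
The plan is to apply the translation recipe of this section: express $\rdiv{\mathcal{D}_3}{\mathcal{D}_1}{\mathcal{D}_2}$ by a PTC-expression, push it through Theorem~\ref{PTC:PTCtoRA}, and then simplify the mediator that the translation produces down to the stated $E^{\DIN}$. The semantic equivalence needed at the start is exactly the one worked out in the example immediately preceding the theorem: with relation symbols $\blmathbb{D}_1,\blmathbb{D}_2,\blmathbb{D}_3$ interpreted by $\mathcal{D}_1,\mathcal{D}_2,\mathcal{D}_3$, the PTC-expression $\mathcal{T}(\blmathbb{r}) = \bigwedge_{\blmathbb{s}}\bigl(\blmathbb{D}_3(\blmathbb{r})\otimes(\blmathbb{D}_2(\blmathbb{s})\rightarrow\blmathbb{D}_1(\blmathbb{rs}))\bigr)$ evaluates to $\rdiv{\mathcal{D}_3}{\mathcal{D}_1}{\mathcal{D}_2}$. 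Hence it suffices to produce an RA-counterpart of $\mathcal{T}$ and recognise it as the right-hand side.

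I would translate $\mathcal{T}$ bottom-up along the inductive construction in the proof of Theorem~\ref{PTC:PTCtoRA}. The atoms $\blmathbb{D}_i$ translate to the symbols themselves. The implication $\blmathbb{D}_2(\blmathbb{s})\rightarrow\blmathbb{D}_1(\blmathbb{rs})$ has antecedent on $S$ and consequent on $RS$, so the $\rightarrow$-case yields $(\blmathbb{D}_2\bowtie\RAeadom{RS})\fresd^{\RAeadom{RS}}(\blmathbb{D}_1\bowtie\RAeadom{S})$; the $\otimes$-case then adjoins the join with $\blmathbb{D}_3$; and finally the outer $\bigwedge_{\blmathbb{s}}$, via the Small-Divide alternative of the infimum case, produces $F' = \RAeadom{R}\div_\mathrm{gsdo}^{E}\RAeadom{S}$ with mediator $E = \blmathbb{D}_3\bowtie\bigl((\blmathbb{D}_2\bowtie\RAeadom{RS})\fresd^{\RAeadom{RS}}(\blmathbb{D}_1\bowtie\RAeadom{S})\bigr)$. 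Evaluating $F'$ in $\DIN$ already gives $\RAeadomEV{R}\div_\mathrm{gsdo}^{E^{\DIN}}\RAeadomEV{S}$, matching the shape of the claim; what remains is to show that this mechanically-produced $E^{\DIN}$ coincides with the cleaner mediator stated in the theorem.

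This last point is the only real obstacle, and I would settle it by a pointwise computation. Unfolding join, residuum, and the semantics of $\fresd^{(\cdot)}$ gives, at a tuple $rs$,
\[ E^{\DIN}(rs) = \mathcal{D}_3(r)\otimes\RAeadomEV{RS}(rs)\otimes\bigl((\mathcal{D}_2(s)\otimes\RAeadomEV{RS}(rs))\rightarrow(\mathcal{D}_1(rs)\otimes\RAeadomEV{S}(s))\bigr). \]
Because the extended active domains are non-ranked and $\RAeadomEV{RS}(rs)=\RAeadomEV{R}(r)\wedge\RAeadomEV{S}(s)$, a split on the value $\RAeadomEV{RS}(rs)\in\{0,1\}$ collapses the right-hand side: when it is $0$ the whole product is $0$, and when it is $1$ both $\RAeadomEV{R}(r)$ and $\RAeadomEV{S}(s)$ equal $1$, so the inner domain-joins act as identities and the expression reduces to $\mathcal{D}_3(r)\otimes(\mathcal{D}_2(s)\rightarrow\mathcal{D}_1(rs))$. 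The same two-case check applied to $\mathcal{D}_3\bowtie\bigl((\mathcal{D}_2\bowtie\RAeadomEV{R})\fresd^{\RAeadomEV{RS}}\mathcal{D}_1\bigr)$ yields the identical value, so the two mediators agree and the theorem follows.

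I would also flag the facts underlying this collapse: $\RAeadomEV{R},\RAeadomEV{S},\RAeadomEV{RS}$ contain every attribute value occurring in $\mathcal{D}_1,\mathcal{D}_2,\mathcal{D}_3$, so tuples lying outside them carry score $0$ in the corresponding relations; and for $0/1$-valued degrees the monoidal $\otimes$ and the lattice $\wedge$ agree, which is what licenses replacing $\RAeadomEV{RS}$ by $\RAeadomEV{R}\wedge\RAeadomEV{S}$ and dropping the redundant domain-joins. These are the absorption and idempotence properties of extended active domains; everything else is the routine bookkeeping of the inductive translation.
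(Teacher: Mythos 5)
Your proposal is correct and follows essentially the same route as the paper's own proof: construct the semantically equivalent PTC-expression, push it through Theorem~\ref{PTC:PTCtoRA} (using the Small-Divide alternative in the $\bigwedge$ case) to obtain $F = \RAeadom{R}\div_\mathrm{gsdo}^{E}\RAeadom{S}$ with the mechanical mediator, and then simplify that mediator to the stated $E^{\DIN}$. Your pointwise two-case analysis on $\RAeadomEV{RS}(rs)\in\{0,1\}$ is in fact a more explicit justification of the simplification step than the paper's brief appeal to ``properties of $\bowtie$'' and the content of $\RAeadomEV{S}$, but it is the same argument.
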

\begin{proof}
First, using the relation symbols $\blmathbb{D}_1, \blmathbb{D}_2$, and
$\blmathbb{D}_3$, corresponding to the input relations $\mathcal{D}_1,
\mathcal{D}_2,$ and $\mathcal{D}_3$, we construct a PTC-expression
$\mathcal{T}(\blmathbb{r})$ that is semantically equivalent to the division
operation. We have
\[
  \mathcal{T}(\blmathbb{r}) =
    \textstyle\bigwedge_{\blmathbb{s}}
  \bigl(
  \blmathbb{D}_3(\blmathbb{r}) \otimes
  (\blmathbb{D}_2(\blmathbb{s}) \rightarrow  \blmathbb{D}_1(\blmathbb{rs}))
  \bigr)
\] and it holds that
$\bigl(\rdiv{\mathcal{D}_3}{\mathcal{D}_1}{\mathcal{D}_2}\bigr)(r) =
\mathcal{T}^{\DIN}(r)$ for all $r \in \Tupl{R}$ in any database instance $\DIN$
that maps the relation symbols to their respective input relations.
According to the Theorem \ref{PTC:PTCtoRA} there is an equivalent RA-expression
$F$ such that $\mathcal{T}^{\DIN}(r) = F^{\DIN}(r)$ for all $r \in \Tupl{R}$.
The sought RA-expression $F$ is
\[
F = \RAeadom{R} \div_\mathrm{gsdo}^{E} \RAeadom{S} \]
where
\[
E = \blmathbb{D}_3 \bowtie \bigl( ( \blmathbb{D}_2 \bowtie \RAeadom{RS} )
\rightarrowtriangle^{\RAeadom{RS}} ( \blmathbb{D}_1 \bowtie \RAeadom{S} ) \bigr).
\]
By evaluating $E$ in the database instance $\DIN$ that maps the relation symbols
 to their respective input relations we get a relation
\[
 E^{\DIN} = \mathcal{D}_3 \bowtie \bigl( ( \mathcal{D}_2 \bowtie \RAeadomEV{RS} )
 \rightarrowtriangle^{\RAeadomEV{RS}}
 ( \mathcal{D}_1 \bowtie \RAeadomEV{S} ) \bigr).
\]

The most simple database instance that maps the relation symbols
to their respective input relations contains just the relations
$\mathcal{D}_1, \mathcal{D}_2,$ and $\mathcal{D}_3$. The relations
$\RAeadomEV{RS}, \RAeadomEV{S},$ and $\RAeadomEV{R}$, obtained by evaluating
$\RAeadom{RS}, \RAeadom{S},$ and $\RAeadom{R}$, in such database instance
therefore contain tuples built only from the values from relations
$\mathcal{D}_1, \mathcal{D}_2$, and $\mathcal{D}_3$ as required.

It can be easily checked that even if the database instance contained more
relations and thus the extended active domains
$\RAeadomEV{RS}, \RAeadomEV{S},$ and $\RAeadomEV{R}$ contained more tuples built from
values from other relations these additional tuples do not change the result of
evaluating the RA-expression $F$. It is safe to build the
relations $\RAeadomEV{RS}, \RAeadomEV{S},$ and $\RAeadomEV{R}$, only from the
values from relations $\mathcal{D}_1, \mathcal{D}_2$, and $\mathcal{D}_3$.

From the properties of $\bowtie$ and the fact that $\RAeadomEV{S}$ contains the
projection of relation $\mathcal{D}_1$ to $S$ and the relation $\mathcal{D}_2$,
we can further simplify the form of the relation $E^{\DIN}$ to
\[
 E^{\DIN} = \mathcal{D}_3 \bowtie \bigl( ( \mathcal{D}_2 \bowtie \RAeadomEV{R} )
 \rightarrowtriangle^{\RAeadomEV{RS}} \mathcal{D}_1 \bigr).
\]
Putting all things together we have
\[
  \bigl(\rdiv{\mathcal{D}_3}{\mathcal{D}_1}{\mathcal{D}_2}\bigr)(r) =
  \mathcal{T}^{\DIN}(r) = F^{\DIN}(r) =
  (\RAeadomEV{R} \div_\mathrm{gsdo}^{E^{\DIN}} \RAeadomEV{S})(r)
\]
for all $r \in \Tupl{R}$ with the relations $E^{\DIN}$ and
$\RAeadomEV{R}, \RAeadomEV{S}, \RAeadomEV{RS}$ defined as above.
\end{proof}

\begin{theorem}
Let $\mathcal{D}_1$, $\mathcal{D}_2$, and $\mathcal{D}_3$ be RDTs on $R$, $S$,
and $RS$, respectively, and let $\div$ be the division operation defined by
\eqref{def:rdiv}.
For Date's Small Divide we have
\[ \bigl(
  \mathcal{D}_1 \div^{\mathcal{D}_3}_\mathrm{gsdo} \mathcal{D}_2
  \bigr)(r) =
  \bigl( \mathcal{D}_1 \bowtie
  (E^{\DIN} \div^{\RAeadomEV{R}} \RAeadomEV{S}) \bigr)(r), \]
for all $r \in \Tupl{R}$ where
\begin{align*}
E^{\DIN} & =  \bigl( (\mathcal{D}_2 \bowtie \RAeadomEV{R})
\rightarrowtriangle^{\RAeadomEV{RS}} \mathcal{D}_3 \bigr)
\end{align*}
and the extended active domains $\RAeadomEV{R}, \RAeadomEV{S}$, and
$\RAeadomEV{RS}$ contain tuples built only from the values from relations
$\mathcal{D}_1, \mathcal{D}_2$, and $\mathcal{D}_3$.
\end{theorem}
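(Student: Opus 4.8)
The plan is to repeat the strategy of Theorem~\ref{EQN:GSDObyOUR}, but now starting from Date's Small Divide \eqref{def:gSmallOriginal} and translating it into the algebra built from the operation~\eqref{def:rdiv}. First I would introduce relation symbols $\blmathbb{D}_1$, $\blmathbb{D}_2$, $\blmathbb{D}_3$ on $R$, $S$, $RS$ and record the PTC-expression
\[
\mathcal{T}(\blmathbb{r}) =
\blmathbb{D}_1(\blmathbb{r}) \otimes
\textstyle\bigwedge_{\blmathbb{s}}
\bigl( \blmathbb{D}_2(\blmathbb{s}) \rightarrow \blmathbb{D}_3(\blmathbb{rs}) \bigr),
\]
which by \eqref{def:gSmallOriginal} is semantically equivalent to $\div_\mathrm{gsdo}$ in the sense fixed at the beginning of this section, i.e. $(\mathcal{D}_1 \div^{\mathcal{D}_3}_\mathrm{gsdo} \mathcal{D}_2)(r) = \mathcal{T}^{\DIN}(r)$ for every instance $\DIN$ mapping the symbols to the given RDTs.

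Next I would apply the translation of Theorem~\ref{PTC:PTCtoRA} from the inside out. The innermost implication $\blmathbb{D}_2(\blmathbb{s}) \rightarrow \blmathbb{D}_3(\blmathbb{rs})$ is handled by case~2(c) and becomes $(\blmathbb{D}_2 \bowtie \RAeadom{RS}) \rightarrowtriangle^{\RAeadom{RS}} (\blmathbb{D}_3 \bowtie \RAeadom{S})$ on $RS$; the quantifier $\bigwedge_{\blmathbb{s}}$ over the variable on $S$ is handled by case~5(b), wrapping this in $\div^{\RAeadom{R}} \RAeadom{S}$ and yielding a relation on $R$; and the outer $\otimes$ against $\blmathbb{D}_1(\blmathbb{r})$ is handled by case~2(a), prepending a join by $\blmathbb{D}_1$ (which on the common scheme $R$ is exactly componentwise $\otimes$). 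Evaluating the resulting RA-expression in the minimal instance that maps each symbol to its RDT gives
\[
\mathcal{D}_1 \bowtie \Bigl( \bigl( (\mathcal{D}_2 \bowtie \RAeadomEV{RS}) \rightarrowtriangle^{\RAeadomEV{RS}} (\mathcal{D}_3 \bowtie \RAeadomEV{S}) \bigr) \div^{\RAeadomEV{R}} \RAeadomEV{S} \Bigr).
\]

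It then remains to collapse the inner bracket to the stated $E^{\DIN}$. Because $\RAeadomEV{S}$ is the union of the projections onto $S$ of all base relations, every $S$-value occurring in $\mathcal{D}_2$ or in $\mathcal{D}_3$ already lies in $\RAeadomEV{S}$ with score~$1$; hence $\join{}{\mathcal{D}_3}{\RAeadomEV{S}} = \mathcal{D}_3$, and since $\RAeadomEV{RS} = \join{}{\RAeadomEV{R}}{\RAeadomEV{S}}$ also $\join{}{\mathcal{D}_2}{\RAeadomEV{RS}} = \join{}{\mathcal{D}_2}{\RAeadomEV{R}}$. This turns the bracket into $E^{\DIN} = (\join{}{\mathcal{D}_2}{\RAeadomEV{R}}) \rightarrowtriangle^{\RAeadomEV{RS}} \mathcal{D}_3$ and produces the claimed identity. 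As in Theorem~\ref{EQN:GSDObyOUR}, I would close by remarking that enlarging the instance only adds to the active domains tuples that contribute the neutral value to the relevant aggregations, so it is harmless to build $\RAeadomEV{R}, \RAeadomEV{S}, \RAeadomEV{RS}$ from the values of $\mathcal{D}_1, \mathcal{D}_2, \mathcal{D}_3$ alone.

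The main obstacle is the bookkeeping in this final simplification rather than any conceptual difficulty. One has to verify that each redundant factor $\join{}{\,\cdot\,}{\RAeadomEV{S}}$ truly acts as the identity on $\mathcal{D}_2$ and $\mathcal{D}_3$---this rests squarely on the definition of the extended active domain as a union of projections---and that shrinking the quantification range inside $\div^{\RAeadomEV{R}}$ from $\Tupl{S}$ to $\RAeadomEV{S}$ leaves the infimum unchanged, since any omitted $s$ has $\mathcal{D}_2(s) = 0$ and so contributes $\mathcal{D}_2(s) \rightarrow \mathcal{D}_3(rs) = 1$. A miscounted scheme or a dropped active-domain factor is the only realistic way the equality could fail, so those checks are where I would concentrate care.
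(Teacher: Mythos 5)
Your proposal is correct and follows exactly the route the paper intends: its own proof of this theorem is the single line ``Use similar arguments as in the proof of Theorem~\ref{EQN:GSDObyOUR},'' and your argument is precisely that elaboration --- build the semantically equivalent PTC-expression for $\div_\mathrm{gsdo}$, push it through the cases of Theorem~\ref{PTC:PTCtoRA}, and absorb the redundant active-domain factors. The closing checks you flag (that $\join{}{\mathcal{D}_3}{\RAeadomEV{S}}=\mathcal{D}_3$, that $\join{}{\mathcal{D}_2}{\RAeadomEV{RS}}=\join{}{\mathcal{D}_2}{\RAeadomEV{R}}$, and that tuples outside the extended active domain contribute only neutral elements) are exactly the bookkeeping the paper elides, and they all go through.
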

\begin{proof}
Use similar arguments as in the proof of Theorem \ref{EQN:GSDObyOUR}.
\end{proof}

\begin{theorem}
  Let $\mathcal{D}_1$, $\mathcal{D}_2$, $\mathcal{D}_3$, and $\mathcal{D}_4$
  be RDTs on $R_1, R_2, R_3$, and $R_4$, respectively, and let $\div$ be the
  division operation defined by \eqref{def:rdiv}.
For Darwen's Divide we have
\[
 \bigl(
  \mathcal{D}_1
  \div^{\mathcal{D}_3,\mathcal{D}_4}_\mathrm{gddo}
  \mathcal{D}_2
  \bigr)(r) =
  \bigl( (\mathcal{D}_1 \bowtie \mathcal{D}_2) \bowtie
  (E^{\DIN} \div^{\RAeadomEV{R_1'}} \RAeadomEV{R_2'}) \bigr)(r), \]
for all $r \in \Tupl{R_1 \cup R_2}$ where
\begin{align*}
E^{\DIN} & =  \bigl( (\mathcal{D}_4 \bowtie \RAeadomEV{R_3'})
\rightarrowtriangle^{ \RAeadomEV{R_4'} }
(\pi_{R_3'}(\mathcal{D}_3) \bowtie \RAeadomEV{R_4} ) \bigr), \\
R_1'     & = (R_4 \cap (R_1 \cup R_2)) \cup (R_1 \cap R_3), \\
R_2'   & = R_4 \setminus (R_1 \cup R_2), \\
R_3'   & = R_3 \cap (R_1 \cup R_4), \\
R_4'   & = R_4 \cup (R_1 \cap R_3)
\end{align*}
and the extended active domains contain tuples built only from the values from
relations $\mathcal{D}_1, \mathcal{D}_2, \mathcal{D}_3$, and $\mathcal{D}_4$.
\end{theorem}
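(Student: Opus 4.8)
The plan is to reuse the PTC-based strategy from the proof of Theorem~\ref{EQN:GSDObyOUR}: express $\div_\mathrm{gddo}$ as a PTC-expression over relation symbols $\blmathbb{D}_1,\ldots,\blmathbb{D}_4$, translate it into an RA-expression by the mechanical procedure of Theorem~\ref{PTC:PTCtoRA}, evaluate at the instance $\DIN$ with $\blmathbb{D}_i^{\DIN}=\mathcal{D}_i$, and finally restrict the extended active domains to the values occurring in $\mathcal{D}_1,\ldots,\mathcal{D}_4$. The starting point is the closed form \eqref{def:gradedDarwenNoCondAlt}, which already rewrites the joinability-restricted quantifiers of \eqref{def:gradedDarwen} as quantifiers over the \emph{disjoint} schemes $R_2'=R_4\setminus(R_1\cup R_2)$ and $R_3\setminus(R_1\cup R_4)$; this is precisely the shape required by the PTC quantifier rule, which forbids a bound variable from sharing attributes with the free ones.

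First I would fix tuple variables $\blmathbb{r}_1,\blmathbb{r}_2$ on $R_1,R_2$ for the output and, using the splitting principle to reassemble $r_4=\TupCommon{1}{2}{4}\,r_4'$ and $r_3=\TupCommon{1}{4}{3}\,r_3'$, write the PTC-expression
\[
  \mathcal{T}(\blmathbb{r}) =
  \bigl(\blmathbb{D}_1(\blmathbb{r}_1)\otimes\blmathbb{D}_2(\blmathbb{r}_2)\bigr)
  \otimes
  \bigwedge_{\blmathbb{r}_4'}
  \Bigl(\blmathbb{D}_4(\ldots)\rightarrow
  \bigvee_{\blmathbb{r}_3'}\blmathbb{D}_3(\ldots)\Bigr),
\]
where $\blmathbb{r}_4'$ ranges over $R_2'$ and $\blmathbb{r}_3'$ over $R_3\setminus(R_1\cup R_4)$. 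A direct comparison of the evaluation rules with \eqref{def:gradedDarwenNoCondAlt} yields $\bigl(\mathcal{D}_1\div^{\mathcal{D}_3,\mathcal{D}_4}_\mathrm{gddo}\mathcal{D}_2\bigr)(r)=\mathcal{T}^{\DIN}(r)$ for all $r\in\Tupl{R_1\cup R_2}$ (terms with $r_4'$ or $r_3'$ outside the active domain are neutral for $\bigwedge$ and $\bigvee$, so the restriction to the extended active domain in the PTC semantics matches the full-tuple form). The two disjointness side-conditions needed for $\mathcal{T}$ to be a legal PTC-expression, namely $\bigl(R_3\setminus(R_1\cup R_4)\bigr)\cap R_3'=\emptyset$ and $R_2'\cap R_1'=\emptyset$, are immediate from the complementary definitions of the primed schemes.

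Next I would run Theorem~\ref{PTC:PTCtoRA} clause by clause, from the inside out. The innermost supremum becomes the projection $\pi_{R_3'}(\blmathbb{D}_3)$ with $R_3'=R_3\cap(R_1\cup R_4)$; the implication becomes, on scheme $R_4'=R_4\cup(R_1\cap R_3)$,
\[
  E=\bigl(\blmathbb{D}_4\bowtie\RAeadom{R_3'}\bigr)
  \fresd^{\RAeadom{R_4'}}
  \bigl(\pi_{R_3'}(\blmathbb{D}_3)\bowtie\RAeadom{R_4}\bigr);
\]
the universal quantifier over $\blmathbb{r}_4'$ becomes the fundamental division $E\div^{\RAeadom{R_1'}}\RAeadom{R_2'}$ of \eqref{def:rdiv}, whose range scheme is $R_1'=R_4'\setminus R_2'$; and the two multiplicative conjunctions become natural joins, giving $F=(\blmathbb{D}_1\bowtie\blmathbb{D}_2)\bowtie\bigl(E\div^{\RAeadom{R_1'}}\RAeadom{R_2'}\bigr)$. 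Evaluating $F$ at $\DIN$ turns $E$ into the relation $E^{\DIN}$ of the statement and the domain expressions into $\RAeadomEV{R_3'},\RAeadomEV{R_4},\RAeadomEV{R_4'},\RAeadomEV{R_1'},\RAeadomEV{R_2'}$, so $F^{\DIN}(r)$ is exactly the right-hand side of the claim.

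The bulk of the work, and the main obstacle, is the scheme bookkeeping that makes this identification exact. I would verify the set-theoretic identities $R_4\cup R_3'=R_4'$ (so that $E$ is typed on $R_4'$), $R_4'\setminus R_2'=(R_4\cap(R_1\cup R_2))\cup(R_1\cap R_3)=R_1'$ (so that the division is well-typed with range $R_1'$), and $R_1'\subseteq R_1\cup R_2$ (so that the outer, attribute-overlapping natural join lands back on $R_1\cup R_2$); I would also check that the splitting-principle reassembly of $r_3$ and $r_4$ reproduces the attribute-sharing pattern encoded by $\between$ in \eqref{def:gradedDarwen}. The remaining step is the now-standard argument from Theorem~\ref{EQN:GSDObyOUR}: additional tuples contributed to the extended active domains by a larger instance never alter the value of $F$, so the domains may be taken over the values of $\mathcal{D}_1,\ldots,\mathcal{D}_4$ only, which completes the proof.
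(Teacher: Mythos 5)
Your proposal is correct and follows essentially the same route as the paper: build the PTC-expression from the joinability-free form \eqref{def:gradedDarwenNoCondAlt}, translate inside-out via Theorem~\ref{PTC:PTCtoRA} to obtain the same $E$ and $F$ with identical scheme bookkeeping ($R_4'=R_4\cup R_3'$, $R_1'=R_4'\setminus R_2'$, $R_1'\subseteq R_1\cup R_2$), and finish with the active-domain restriction argument. The only cosmetic difference is that you keep the inner existential as an explicit PTC supremum over $R_3\setminus(R_1\cup R_4)$ and let the translation produce $\pi_{R_3'}(\blmathbb{D}_3)$, whereas the paper writes that projection directly into the atomic subexpression; the resulting RA-expression is the same.
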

\begin{proof}
As in the previous proofs, we construct PTC-expression $\mathcal{T}(\mathbf{r})$
 that is semantically equivalent to the division operation defined by
\eqref{def:gradedDarwenNoCondAlt}. Using the relation symbols $\blmathbb{D}_1,
\blmathbb{D}_2, \blmathbb{D}_3, $ and $\blmathbb{D}_4$ that correspond to the
input relations $\mathcal{D}_1, \mathcal{D}_2, \mathcal{D}_3$, and
$\mathcal{D}_4$ we get
\[
  \mathcal{T}(\mathbf{r}) =
  ( \blmathbb{D}_1 \bowtie \blmathbb{D}_2 ) (\mathbf{r}) \otimes
    \textstyle\bigwedge_{\mathbf{r'_{\text{b}}}}
  \bigl(
  \blmathbb{D}_4(\mathbf{r'_\text{f}}, \mathbf{r'_\text{b}})
  \rightarrow
  \pi_{R_3'}( \blmathbb{D}_3 )( \mathbf{r''_\text{f}, r''_\text{b} } )
  \bigr)
\]
where
\begin{itemize}
  \item $\mathbf{r}$ is on a relation scheme $R_1 \cup R_2$,
  \item $\mathbf{r'_{\text{b}}}$ is on $ R_2' = R_4 \setminus (R_1 \cup R_2)$,
  \item $\mathbf{r'_\text{f}}$ is on $ R_4 \cap (R_1 \cup R_2)$,
  \item $ R_3' = R_3 \cap (R_1 \cup R_4) $,
  \item $\mathbf{r''_\text{f}}$ is on $(R_1 \cup (R_2 \cap R_4)) \cap R_3'$,
  \item $\mathbf{r''_\text{b}}$ is on $R_2' \cap R_3'$
\end{itemize}
such that each set of tuple variables contains one tuple variable for each
attribute in the relation schema of the corresponding subexpression.
For instance, the set of tuple variables $\mathbf{r}$ can be characterized as
$\mathbf{r} = \{ \blmathbb{r}_y \mid y \in R_1 \cup R_2 \}$.

According to the Theorem \ref{PTC:PTCtoRA} there is an equivalent RA-expression
$F$ such that $\mathcal{T}^{\DIN}(r) = F^{\DIN}(r)$ for all $r \in \Tupl{R}$.
Again, the database instance $\DIN$ should map each relation symbol to its
corresponding input relation.
In order to find the RA-expression $F$, we first find the RA-expression $E$ that
 corresponds to the PTC-subexpression
$\blmathbb{D}_4(\mathbf{r'_\text{f}}, \mathbf{r'_\text{b}})
\rightarrow
\pi_{R_3'}( \blmathbb{D}_3 )( \mathbf{r''_\text{f}, r''_\text{b} } )$.
The sought RA-expression is
\[
  E = \left( \left(\blmathbb{D}_4 \bowtie \RAeadom{R_3'} \right)
  \rightarrowtriangle^{ \RAeadom{ R_4' }}
  \left(\pi_{R_3'}(\blmathbb{D}_3) \bowtie \RAeadom{R_4} \right) \right)
\]
where $R_4' = R_4 \cup R_3' = R_4 \cup (R_3 \cap (R_1 \cup R_4)) =
R_4 \cup (R_1 \cap R_3) $.

Now, we are ready to find the RA-expression $F$ that corresponds to the whole
PTC-expression $\mathcal{T}(\mathbf{r})$. We have
\[
F = ( \blmathbb{D}_1 \bowtie \blmathbb{D}_2 )
  \bowtie
    ( E \div^{\RAeadom{R_1'}} \RAeadom{R_2'}) \]
where $R_1' = R_4' \setminus R_2' =
(R_4 \cap (R_1 \cup R_2)) \cup (R_1 \cap R_3). $
Since it holds that $R_1' \subseteq (R_1 \cup R_2)$ the relation scheme of $F$
is $R_1 \cup R_2$ as required.

The rest of the proof is clear.
\end{proof}

In the previous chapters, we have already shown that Date's Small Divide is a
special case of Date's Great Divide which is in turn a special case of Darwen's
Divide. Furthermore, we have shown that if the $\mathbf{L}$ is prelinear or
divisible, then there is a simple correspondence between Date's Small Divide and
 the division operation defined by \eqref{def:rdiv}.

In this chapter we have shown that they are equivalent regardless of the
properties of $\mathbf{L}$. We have also shown that Darwen's Divide can be
expressed by the division operation defined by~\eqref{def:rdiv}. As a consequence
 we get the equivalence of all domain-independent division operations presented
 in this paper. Furthermore we have an exact way to express one division using
 the other. Therefore we can summarize the observations as follows:

\begin{corollary}
All domain-independent division operations presented in this paper are
equivalent. \qed
\end{corollary}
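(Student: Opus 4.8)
The plan is to read ``equivalent'' as \emph{mutual definability}: for any two of the domain-independent operations one can write either as a relational-algebra expression built from the fundamental operations together with the other. The operations concerned are the range division~\eqref{def:rdiv}, Date's original and generalized Small Divides~\eqref{def:gSmallOriginal} and~\eqref{def:gSmall}, Date's original Great Divide~\eqref{def:gGreatOriginal} together with its generalization, and Darwen's Divide~\eqref{def:gradedDarwen}; the domain-dependent graded Codd~\eqref{eqn:gradeCodd_dd} and Todd~\eqref{eqn:KB} operations are deliberately excluded. Instead of treating the many pairs separately, I would use the range division~\eqref{def:rdiv} as a hub and prove two facts: (I) every operation on the list reduces to~\eqref{def:rdiv}, and (II) \eqref{def:rdiv} reduces to every operation on the list. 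From (I) and (II), for any two operations $O_1$ and $O_2$ we obtain a reduction of $O_1$ to~\eqref{def:rdiv} and of~\eqref{def:rdiv} to $O_2$, and composing them expresses $O_1$ through $O_2$, which is exactly the asserted equivalence.

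For direction (I), the mechanism is the PTC translation. Each operation on the list is defined by a formula combining natural joins ($\otimes$), residua ($\rightarrow$), and infima over a tuple variable, so each has a semantically equivalent PTC-expression, and Theorem~\ref{PTC:PTCtoRA} converts that expression into an RA-expression in the fundamental operations whose only division-like ingredient is~\eqref{def:rdiv} (the case of the PTC-to-RA translation that interprets $\ptcinf$). For $\div_\mathrm{gsdo}$ and $\div_\mathrm{gddo}$ this is carried out explicitly in the theorems of Section~\ref{sec:more}; for $\div_\mathrm{ggdo}$ it follows because $\div_\mathrm{ggdo}$ coincides with $\div_\mathrm{gddo}$ on conforming schemes; and for the generalized Small and Great Divides it follows by applying Theorem~\ref{PTC:PTCtoRA} to their PTC-expressions in the same way.

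For direction (II), I would start from Theorem~\ref{EQN:GSDObyOUR}, which expresses~\eqref{def:rdiv} through $\div_\mathrm{gsdo}$, and then push $\div_\mathrm{gsdo}$ downward using the special-case facts proved earlier: $\div_\mathrm{gsdo}$ is the instance $\div_\mathrm{ggdo}^{\mathcal{D}_3,\mathcal{D}_2}\Dee{1}$ of the Great Divide and the instance $\div_\mathrm{gddo}^{\mathcal{D}_3,\mathcal{D}_2}\Dee{1}$ of Darwen's Divide, while it is itself the instance of the generalized Small Divide obtained by taking the extra schemes empty, and hence (through $\div_\mathrm{ggdo}$) an instance of the generalized Great Divide. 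Thus~\eqref{def:rdiv} reduces first to $\div_\mathrm{gsdo}$ and then, by substituting fixed arguments, to each remaining operation. Together with (I), every operation routes through~\eqref{def:rdiv}; writing $\div_\mathrm{rng}$ for the range division~\eqref{def:rdiv}, one obtains schematically the cycle
\[
\div_\mathrm{gddo} \longrightarrow \div_\mathrm{rng} \longrightarrow \div_\mathrm{gsdo} \longrightarrow \div_\mathrm{ggdo} \longrightarrow \div_\mathrm{gddo},
\]
in which each arrow is an ``is expressible using'' edge proved above, and the generalized variants attach to this cycle as special cases.

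The step I expect to be the main obstacle is completing direction (I) for the generalized Small and Great Divides, since these were not given explicit reduction theorems. Their definitions contain the projections $\project{S}{\mathcal{D}_2}$ and $\project{RS}{\mathcal{D}_3}$, that is, suprema nested inside the outer infimum, and the PTC-to-RA translation pads argument schemes with the extended active domains $\RAeadomEV{\cdot}$; the genuinely technical point is to verify that these paddings and projections preserve scores exactly and do not reintroduce domain dependence, so that the resulting RA-expression really uses only the fundamental operations and~\eqref{def:rdiv}. Once this score-preservation is checked---exactly as in the proofs of Section~\ref{sec:more}---the two directions combine to give the stated equivalence, and specializing $\mathbf{L}$ to the two-element Boolean algebra recovers the classical interdefinability of these divisions.
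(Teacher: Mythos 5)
Your proposal is correct and follows essentially the same route as the paper: the corollary is stated there without a separate proof precisely because it is the composition of the already-established ``expressible via'' arrows --- Theorem~\ref{EQN:GSDObyOUR} ($\fresd$-division via $\div_\mathrm{gsdo}$), the Section~\ref{sec:more} theorems expressing $\div_\mathrm{gsdo}$ and $\div_\mathrm{gddo}$ via \eqref{def:rdiv}, and the special-case corollaries linking $\div_\mathrm{gsdo}$, $\div_\mathrm{ggdo}$, and $\div_\mathrm{gddo}$ --- which is exactly the cycle you assemble, with \eqref{def:rdiv} as the hub. Your flagged concern about the generalized Small and Great Divides is fair, but the paper handles them the same way you propose (they are not given separate reduction theorems either), so there is no substantive divergence.
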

This result solves an open question concerning the
relationship of Date's Great Divide and Darwen's Divide in the classic
setting, see \cite[page 187]{DaDe:DErd}.

\section{Conclusion}
We have presented a survey of graded generalizations of classic division-like
operations in a rank-aware model of data. We have focused on generalizing
variants of division-like operations which are neglected by other
rank-aware approaches in databases.
In our model we assume that~\eqref{def:rdiv} is a fundamental
operation. Under this assumption,
we have shown that all the graded generalizations of the classic division
operations we have studied in this paper are derived operations.
That is, considering the original graded division~\eqref{def:rdiv} as
the fundamental division, i.e., including it in the relational algebra,
all the other divisions~\eqref{def:gSmallOriginal}, \eqref{def:gSmall},
\eqref{def:gGreatOriginal} and \eqref{def:gradedDarwenNoCondAlt}, are derived 
operations in our model. Furthermore,
using the Pseudo Tuple Calculus (PTC), we have shown that the various
variants of the division operations are mutually definable. Interestingly,
some of our observations we have made on the general level
(considering $\mathbf{L}$ as a general complete residuated lattice)
pertain to the classic model---when $\mathbf{L}$ is considered as
the two-element Boolean algebra. For instance, we have shown that
Date's Great Divide and Darwen's Divide are mutually definable. This result
solves an open question that was stated by Date in \cite[page 187]{DaDa:3rdM}.

Future research in the area may include considerations on the role of
fundamental and derived operations in the model. The fundamental
division~\eqref{def:rdiv} cannot be dropped without losing the
expressive power of the relational algebra since in general we
cannot introduce universal quantifiers using the existential ones.
On the other hand, there may be ways to simplify the present relational
algebra by considering other forms of division-like operations.
One way to go is to introduce graded subsethood as a fundamental
(graded) comparator of relations, and use analogous techniques as
image relations~\cite{DaDe:DEim} to express the division.

\subsection*{Acknowledgment}
Supported by grant no. \verb|P202/14-11585S| of the Czech Science Foundation.
\newline
O.~Vaverka was also supported by internal student grant \verb|IGA_PrF_2015_023|
of Palacky University Olomouc.


\bibliographystyle{amsplain}
\bibliography{ovvv}

\providecommand{\bysame}{\leavevmode\hbox to3em{\hrulefill}\thinspace}
\providecommand{\MR}{\relax\ifhmode\unskip\space\fi MR }
\providecommand{\MRhref}[2]{%
  \href{http://www.ams.org/mathscinet-getitem?mr=#1}{#2}
}
\providecommand{\href}[2]{#2}
\begin{thebibliography}{10}

\bibitem{BaKo:Sipfrp}
Wyllis Bandler and Ladislav~J. Kohout, \emph{Semantics of implication operators
  and fuzzy relational products}, International Journal of Man-Machine Studies
  \textbf{12} (1980), no.~1, 89--116.

\bibitem{Bel:FRS}
Radim Belohlavek, \emph{Fuzzy relational systems: Foundations and principles},
  Kluwer Academic Publishers, Norwell, MA, USA, 2002.

\bibitem{BeVy:TODS}
Radim Belohlavek and Vilem Vychodil, \emph{Relational similarity-based
  databases {I}: {F}oundations and query systems}, (submitted).

\bibitem{BeVy:Qssbd}
Radim Belohlavek and Vilem Vychodil, \emph{Query systems in similarity-based
  databases: Logical foundations, expressive power, and completeness},
  Proceedings of the 2010 ACM Symposium on Applied Computing (New York, NY,
  USA), SAC '10, ACM, 2010, pp.~1648--1655.

\bibitem{Bir:LT}
Garrett Birkhoff, \emph{Lattice theory}, 1st ed., American Mathematical
  Society, Providence, 1940.

\bibitem{Bo:Opdfr}
Patrick Bosc, \emph{On the primitivity of the division of fuzzy relations},
  Soft Computing \textbf{2} (1998), no.~2, 35--47.

\bibitem{BoDuPiPr:Fqrdedo}
Patrick Bosc, Didier Dubois, Olivier Pivert, and Henri Prade, \emph{Flexible
  queries in relational databases---the example of the division operator},
  Theoretical Computer Science \textbf{171} (1997), no.~1--2, 281--302.

\bibitem{PiBo:book}
Patrick Bosc and Olivier Pivert, \emph{Fuzzy preference queries to relational
  databases}, Imperial College Press, 2012.

\bibitem{BoPiRo:Crdfr}
Patrick Bosc, Olivier Pivert, and Daniel Rocacher, \emph{Characterizing the
  result of the division of fuzzy relations}, International Journal of
  Approximate Reasoning \textbf{45} (2007), no.~3, 511--530.

\bibitem{CiHa:Tnbpfl}
Petr Cintula and Petr H\'ajek, \emph{Triangular norm based predicate fuzzy
  logics}, Fuzzy Sets and Systems \textbf{161} (2010), no.~3, 311--346.

\bibitem{CiHaNo1}
Petr Cintula, Petr H\'ajek, and Carles Noguera (eds.), \emph{Handbook of
  {M}athematical {F}uzzy {L}ogic, {V}olume 1}, Studies in Logic, Mathematical
  Logic and Foundations, vol.~37, College Publications, 2011.

\bibitem{CiHaNo2}
Petr Cintula, Petr H\'ajek, and Carles Noguera (eds.), \emph{Handbook of
  {M}athematical {F}uzzy {L}ogic, {V}olume 2}, Studies in Logic, Mathematical
  Logic and Foundations, vol.~38, College Publications, 2011.

\bibitem{Co:Armodflsdb}
Edgar~F. Codd, \emph{A relational model of data for large shared data banks},
  Commun. ACM \textbf{13} (1970), 377--387.

\bibitem{DaReSu:Pdditd}
Nilesh Dalvi, Christopher R\'{e}, and Dan Suciu, \emph{Probabilistic databases:
  diamonds in the dirt}, Commun. ACM \textbf{52} (2009), 86--94.

\bibitem{Dat:DRM}
Christopher~J. Date, \emph{The database relational model: A retrospective
  review and analysis}, Addison Wesley, 2000.

\bibitem{DaDe:DErd}
Christopher~J. Date and Hugh Darwen, \emph{A brief history of the relational
  divide operator}, Database Explorations: Essays on The Third Manifesto and
  Related Topics, Trafford Publishing, 2010, pp.~169--198.

\bibitem{DaDa:3rdM}
\bysame, \emph{Database explorations: Essays on the third manifesto and related
  topics}, Trafford Publishing, 2010.

\bibitem{DaDe:DEim}
\bysame, \emph{Image relations}, Database Explorations: Essays on The Third
  Manifesto and Related Topics, Trafford Publishing, 2010, pp.~237--272.

\bibitem{DuPr:Sqofrd}
Didier Dubois and Henri Prade, \emph{Semantics of quotient operators in fuzzy
  relational databases}, Fuzzy Sets and Systems \textbf{78} (1996), no.~1,
  89--93.

\bibitem{EsGo:MTL}
Francesc Esteva and Llu{\'\i}s Godo, \emph{Monoidal t-norm based logic:
  {T}owards a logic for left-continuous t-norms}, Fuzzy Sets and Systems
  \textbf{124} (2001), no.~3, 271--288.

\bibitem{Fa98:CFIfMS}
Ronald Fagin, \emph{Combining fuzzy information from multiple systems}, J.
  Comput. Syst. Sci. \textbf{58} (1999), no.~1, 83--99.

\bibitem{Fagin:Oaafm}
Ronald Fagin, Amnon Lotem, and Moni Naor, \emph{Optimal aggregation algorithms
  for middleware}, J. Comput. Syst. Sci. \textbf{66} (2003), no.~4, 614--656.

\bibitem{GaJiKoOn:RL}
Nikolaos Galatos, Peter Jipsen, Tomacz Kowalski, and Hiroakira Ono,
  \emph{{R}esiduated {L}attices: {A}n {A}lgebraic {G}limpse at {S}ubstructural
  {L}ogics, {V}olume 151}, 1st ed., Elsevier Science, San Diego, USA, 2007.

\bibitem{GaWi:FCA}
Bernhard Ganter and Rudolf Wille, \emph{Formal concept analysis: Mathematical
  foundations}, 1st ed., Springer-Verlag New York, Inc., Secaucus, NJ, USA,
  1997.

\bibitem{Gog:Lic}
Joseph~A. Goguen, \emph{The logic of inexact concepts}, Synthese \textbf{19}
  (1969), 325--373.

\bibitem{Got:Mfl}
Siegfried Gottwald, \emph{Mathematical fuzzy logics}, Bulletin of Symbolic
  Logic \textbf{14} (2008), no.~2, 210--239.

\bibitem{Haj:MFL}
Petr H\'ajek, \emph{Metamathematics of fuzzy logic}, Kluwer Academic
  Publishers, Dordrecht, The Netherlands, 1998.

\bibitem{Ho:ML}
Ulrich H\"ohle, \emph{Monoidal logic}, Fuzzy-Systems in Computer Science
  (R.~Kruse, J.~Gebhardt, and R.~Palm, eds.), Artificial Intelligence /
  K\"unstliche Intelligenz, Vieweg+Teubner Verlag, 1994, pp.~233--243.

\bibitem{Il:ASOTQPTiRDS}
Ihab~F. Ilyas, George Beskales, and Mohamed~A. Soliman, \emph{A survey of top-k
  query processing techniques in relational database systems}, ACM Comp. Surv.
  \textbf{40} (2008), no.~4, 11:1--11:58.

\bibitem{KMP:TN}
Erich~Peter Klement, Radko Mesiar, and Endre Pap, \emph{Triangular norms}, 1
  ed., Springer, 2000.

\bibitem{LiChCh:RQAaOfRTQ}
Chengkai Li, Kevin Chen-Chuan Chang, Ihab~F. Ilyas, and Sumin Song,
  \emph{Rank{SQL}: query algebra and optimization for relational top-k
  queries}, Proc. 2005 ACM SIGMOD, 2005, pp.~131--142.

\bibitem{Me87}
Elliott Mendelson, \emph{Introduction to {M}athematical {L}ogic}, Chapman and
  Hall, 1987.

\bibitem{OrRa:Rmcs}
Ewa Or\l{}owska and Anna~Maria Radzikowska, \emph{Double residuated lattices
  and their applications}, Relational Methods in Computer Science (Harrie~C.M.
  Swart, ed.), Lecture Notes in Computer Science, vol. 2561, Springer Berlin
  Heidelberg, 2002, pp.~171--189 (English).

\bibitem{Wec:UAfCS}
Wolfgang Wechler, \emph{Universal {A}lgebra for {C}omputer {S}cientists}, EATCS
  Monographs on Theoretical Computer Science, vol.~25, Springer-Verlag, Berlin
  Heidelberg, 1992.

\end{thebibliography}

\end{document}